\tikzset{cross/.style={cross out, draw=blue, minimum size=2*(#1-\pgflinewidth), inner sep=0pt, outer sep=0pt},
cross/.default={1.5pt}}
\definecolor{red}{RGB}{146,0,0}
\definecolor{blue}{RGB}{0,109,219}
\definecolor{green}{RGB}{36,255,36}
\DeclareMathOperator{\sdpval}{SDPVal}
\DeclareMathOperator{\optval}{OptVal}
\DeclareMathOperator{\Cov}{Cov}
\DeclareMathOperator{\opt}{Opt}
\DeclareMathOperator{\val}{Val}
\DeclareMathOperator{\rndval}{RndVal}
\newcommand*\samethanks[1][\value{footnote}]{\footnotemark[#1]}
\title{Global Cardinality Constraints Make Approximating Some Max-2-CSPs Harder} 
\author{%
  {\textsc{Per Austrin} \thanks{Research supported by the Approximability and Proof Complexity project funded by the Knut and Alice Wallenberg Foundation.} } \\[1ex]  
\normalsize School of Electrical Engineering and Computer Science
\\ \normalsize KTH Royal Institute of Technology\\ 
\normalsize \href{mailto:austrin@kth.se}{austrin@kth.se} 
\and 
{\textsc{Aleksa Stankovi\'c} \samethanks} \\[1ex] 
\normalsize Department of Mathematics \\ 
\normalsize KTH Royal Institute of Technology\\ 
\normalsize \href{mailto:aleksas@kth.se}{aleksas@kth.se} 
}
\date{\today} 
\theoremstyle{plain}
\newtheorem{thm}{Theorem}[section] 
\newtheorem{lemma}[thm]{Lemma} 
\newtheorem{conj}[thm]{Conjecture} 
\newtheorem{claim}[thm]{Claim}
\theoremstyle{definition}
\newtheorem{defn}[thm]{Definition} 
\begin{document}

\maketitle


\section{Introduction}
Constraint satisfaction problems (CSPs) are one of the most fundamental objects studied in complexity theory.
An instance of a CSP has a set of variables taking values over a certain domain and a set of constraints on tuples of these variables as an input.
Probably the best known CSP is 3-Sat, in which the constraints are clauses, each clause is a disjunction of at most three literals, and each literal is either a variable or negation of a variable.
In the satisfiability version of CSP problems, we are interested whether there is an assignment to the variables which satisfies all the constraints. 
Hardness of deciding satisfiability of CSPs is well understood, due to the dichotomy theorem \cite{Schaefer:1978:CSP:800133.804350} of Schaefer which shows that each CSP with variables taking values in a Boolean domain is either in P or NP-complete, and due to the more recent results of Bulatov \cite{DBLP:conf/focs/Bulatov17} and Zhuk \cite{DBLP:conf/focs/Zhuk17} which settle this question on general domains.
\par 
Another well-studied version is the Max-CSP, which is the optimization version in which we are interested in maximizing the number of constraints satisfied.  This type of problem is NP-hard in most cases and we typically settle with finding a good estimate of the optimal solution, for which we rely on approximation algorithms.
A common example of a constraint satisfaction problem in this setting is Max-Cut, in which the input consists of a graph $G$, and the goal is to partition the vertices into two sets such that the number of edges between the two parts is maximized. 
Approximability of Max-CSPs has been a major research topic which inspired many influential breakthroughs. One of the first surprising results was an algorithm of Goemans and Williamson \cite{DBLP:conf/stoc/GoemansW94}, which uses semidefinite programming (SDP) to approximate the optimal solution to within a constant of $\alpha_{GW} \approx 0.878$.  The SDP approach is also useful in approximating many other well known Max-CSPs, such as Max-3-Sat \cite{DBLP:conf/focs/KarloffZ97} within a constant of $7/8$ and Max-2-Sat \cite{DBLP:conf/ipco/LewinLZ02} within $\alpha_{LLZ} \approx 0.9401$. 
\par 
On the hardness of approximation side, the first NP-hardness results are based on the celebrated PCP theorem \cite{DBLP:conf/focs/AroraLMSS92,DBLP:conf/focs/AroraS92} which provided a strong starting point for studying inapproximability. For example, a direct corollary of the PCP theorem shows that the Max-$3$-Sat problem cannot be approximated within $1-\delta$ for some universal constant $\delta>0$. By using the PCP theorem and parallel repetition \cite{DBLP:journals/siamcomp/Raz98} as a starting point, H\aa stad \cite{DBLP:journals/jacm/Hastad01} proved optimal inapproximability for Max-$3$-Sat by showing that it cannot be approximated better than $7/8+\epsilon$ for any $\epsilon > 0$. 
\par
However, despite further works relying on the similar techniques which improved our understanding of inapproximability for several additional CSPs, the progress on closing the gap between the best algorithm and the best hardness was at a standstill for some fundamental problems such as Max-Cut, until the Unique Games Conjecture (UGC) was introduced by Khot \cite{DBLP:conf/stoc/Khot02a}. In particular, by assuming the UGC, optimality of the $\alpha_{GW}$-approximation algorithm for Max-Cut and the $\alpha_{LLZ}$-approximation algorithm for Max-$2$-Sat was shown in \cite{DBLP:journals/siamcomp/KhotKMO07,MR2630040} and \cite{DBLP:conf/stoc/Austrin07}, respectively. The strength of semidefinite programming for approximating Max-CSPs was corroborated in a breakthrough result of Raghavendra \cite{DBLP:conf/stoc/Raghavendra08}, which showed that assuming the UGC, a certain SDP relaxation achieves optimal approximation ratios for all Max-CSPs.
\par 
Locality of the constraints was of crucial importance in studying CSPs and Max-CSPs since their inception. Therefore, it is not a surprise that typical techniques fail when we work with CSPs for which feasible assignments need to satisfy some additional global constraints, and these problems almost always become harder. For example, while the satisfiability of a $2$-Sat instance can be checked by a straightforward algorithm, Guruswami and Lee recently showed \cite{DBLP:journals/mst/GuruswamiL16} that when the satisfying assignment needs to have exactly half of its variables set to true, this problem becomes NP-hard. Hardness of deciding satisfiability of CSPs in which we prescribe how many variables are assigned to certain values is well understood due to the dichotomy theorem of Bulatov and Marx \cite{DBLP:journals/corr/abs-1010-0201}, which shows that these problems are either NP-hard or in P, and gives a simple classification. Another type of global constraint is studied by Brakensiek et al.~\cite{DBLP:journals/eccc/BrakensiekGG19}, who consider hardness of deciding CSPs in presence of modular constraints, which restrict cardinality of values in an assignment modulo a natural number $M$.
\par 
In this paper we are interested in optimization variants of CSPs with global cardinality constraints, i.e., constraints which specify the number of occurrences of each value from the domain in the assignment. We refer to these problems as CC-Max-CSPs. It is not hard to see that these problems are at least as hard to approximate as their unconstrained counterparts. CC-Max-CSPs have been actively studied in the past. For example the Max-Bisection problem, i.e., Max-Cut where the two partitions need to be of the same size, has been of a particular interest, with a series of papers \cite{DBLP:journals/algorithmica/FriezeJ97}, \cite{DBLP:journals/mp/Ye01},\cite{DBLP:journals/rsa/HalperinZ02}, \cite{DBLP:journals/jal/FeigeL06}, \cite{DBLP:conf/soda/RaghavendraT12} obtaining improved approximation algorithms, until the most recent result which achieves an approximation ratio of $0.8776$ \cite{DBLP:conf/soda/AustrinBG13}, which is only $\approx 10^{-3}$ below the UG-hardness bound $\alpha_{GW}$. The state-of-the-art algorithm \cite{DBLP:conf/soda/RaghavendraT12} for the more general CC-Max-Cut problem achieves an approximation ratio of ${\alpha}^{cc}_{cut} \approx 0.858$.  Another related CC-Max-CSP actively studied is CC-Max-2-Sat, and its monotone variant (a version in which negated literals are not allowed) Max-$k$-VC\footnote{Max-$k$-VC is an abbreviation for maximum $k$ vertex cover, in which we are given a graph and the task is to select a subset of $k$ vertices covering as many of the edges as possible.}. The best algorithm \cite{DBLP:conf/soda/RaghavendraT12} up to date for general CC-Max-2-Sat achieves an approximation ratio of ${\alpha}^{cc}_{2sat}$, where ${\alpha}^{cc}_{2sat} \approx 0.929$, which improved on a series of increasingly stronger algorithms presented in \cite{DBLP:journals/algorithmica/Sviridenko01}, \cite{DBLP:conf/isaac/BlaserM02}, and \cite{DBLP:conf/esa/Hofmeister03}. Manurangsi \cite{DBLP:conf/soda/Manurangsi19} showed that it is UG-hard to approximate Max-$k$-VC within a factor $\alpha_{AKS} \approx 0.944$ (note that this is slightly larger than the hardness of $\alpha_{LLZ} \approx 0.940$ for general Max-$2$-Sat).
\par
Yet another well-studied CC-Max-CSP is the \emph{Densest $k$-Subgraph} (Max-$k$-DS) problem, in which we are given a graph and the objective is to find a maximally dense induced subgraph on $k$ vertices.  Analogously to the Max-$k$-VC problem, Max-$k$-DS can be viewed as the monotone CC-Max-$2$-And problem.  Max-$k$-DS is qualitatively very different from the previously discussed problems.  It is not known to be approximable within a constant factor, and is in fact known to be hard to approximate to within almost polynomial factors assuming the Exponential Time Hypothesis \cite{DBLP:conf/stoc/Manurangsi17}, or to within any constant factor assuming the Small-Set Expansion Hypothesis \cite{DBLP:conf/stoc/RaghavendraS10}.
\par
Obtaining tight approximability results for CC-Max-CSPs presents an important research topic. 
Qualitatively, it is also interesting to determine whether adding a cardinality constraint to a non-trivial Max-CSP makes approximation strictly harder. For example, we know that CC-Max-$2$-Sat is as hard as Max-$2$-Sat, but it is still conceivable that they are equally hard.  In particular, it would be interesting to answer the following question:
\par
\begin{quote}
\begin{center}
 ``Can CC-Max-2-Sat be approximated within $\alpha_{LLZ}$?''
\end{center}
\end{quote}
\par 
So far the only result in this direction comes from \cite{DBLP:conf/soda/AustrinBG13} which shows that the ``bisection version'' (where the cardinality constraint is that exactly half of the variables must be set to true) of CC-Max-2-Sat can be approximated within $\alpha_{LLZ}$. However, the approach taken in that algorithm does not immediately extend to handle general cardinality constraints.   A similar question arises for the CC-Max-Cut problem, but here even the basic Max-Bisection problem is not known to be approximable within the Max-Cut constant $\alpha_{GW} \approx 0.878$.   As far as we are aware, prior to this paper, the only examples of cardinality-constrained Max-CSPs being harder than their unconstrained counterparts were examples where the unconstrained version is easy (e.g.~unconstrained Max-$k$-VC is monotone Max-$2$-Sat, and unconstrained Max-$k$-DS is monotone Max-$2$-And, which are both trivial).
\par
\paragraph{Our results}
In this paper, we answer the above question negatively, by giving improved UG-hardness results for CC-Max-Cut and Max-$k$-VC.
\begin{thm}\label{basic_max-cut-sol}
  For every $\varepsilon > 0$, CC-Max-Cut is UG-hard to approximate within $\beta^{cc}_{cut}+ \varepsilon$, where $\beta^{cc}_{cut} \approx 0.858$. 
\end{thm}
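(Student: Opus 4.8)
The plan is to prove Theorem~\ref{basic_max-cut-sol} by a Unique-Games-based hardness reduction in the long-code / dictatorship-test style of Khot--Kindler--Mossel--O'Donnell, but with the test redesigned so that the completeness solution automatically meets a prescribed global cardinality constraint and the soundness bound genuinely exploits that constraint. Fix a correlation parameter $\rho < 0$ and a bias $p \in (0,1)$, and work over the $p$-biased hypercube $\left(\{-1,1\}^R, \mu_p^{\otimes R}\right)$. The core gadget is the weighted graph $H_{\rho,p}$ on $\{-1,1\}^R$ whose edge distribution samples $x \sim \mu_p^{\otimes R}$ together with a $\rho$-correlated copy $y$; we seek a $\{-1,1\}$-labeling of the vertices maximizing the cut, subject to the global constraint that a $p$-fraction of vertices receive label $1$. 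A $p$-biased dictator $x \mapsto x_i$ both meets this constraint and cuts a $c(\rho,p)$-fraction of the edges, while---and this is the reason for taking $p \neq 1/2$---without the constraint the same gadget would be ``too easy'', since an unbalanced labeling could cut substantially more. (One must also respect that the $p$-biased $\rho$-correlated distribution is a valid distribution only for $\rho$ in a $p$-dependent interval around $0$, and I expect the extremal configuration to sit on the boundary of this region.)

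Next I would compose $H_{\rho,p}$ with a Unique Games instance in the usual fashion: place one copy of the biased cube at every UG vertex, route the edges of $H_{\rho,p}$ through the UG edge permutations, and impose the single global cardinality constraint ``a $p$-fraction of all vertices are labelled $1$''. For completeness, a labeling satisfying a $(1-\eta)$-fraction of the UG constraints yields, via the corresponding biased dictators, a solution with cut value at least $c(\rho,p) - o_\eta(1)$ that exactly respects the cardinality constraint. The substantive part is soundness: assuming the UG instance admits no labeling satisfying an $\eta$-fraction of its constraints, I must show every solution respecting the cardinality constraint has cut value at most $s^{\ast}(\rho,p) + o_\eta(1)$.

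For the soundness analysis, let $f_v \colon \{-1,1\}^R \to \{-1,1\}$ be the restriction of the solution to the cube at UG vertex $v$ and set $\tau_v = \Pr_{\mu_p}[f_v = 1]$, so the cardinality constraint reads $\mathbb{E}_v[\tau_v] = p$. The standard influence-decoding argument shows that a large cut value forces, for almost every UG edge, a coordinate that is influential for both endpoint tables; by the smoothness/pseudorandomness of the UG instance this produces a labeling contradicting UG soundness, unless the tables have only negligible low-degree influences, in which case the invariance principle lets me replace each $f_v$ by a Gaussian threshold at level $\tau_v$. This bounds the cut value by an average over UG edges of a two-point Gaussian quadrant probability $\Gamma_\rho(\tau_u, \tau_w)$ (the biased analogue of Borell's Gaussian Max-Cut bound), which I then maximize over all choices of $\{\tau_v\}$ constrained only by $\mathbb{E}_v[\tau_v] = p$. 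Since the biases may vary across vertices, this maximum is not simply $\Gamma_\rho(p,p)$ but a relaxed quantity $s^{\ast}(\rho,p) = \max\{\mathbb{E}_{a,b \sim \nu}[\Gamma_\rho(a,b)] : \mathbb{E}_\nu = p\}$, an envelope over bias distributions $\nu$; carrying this relaxation correctly is precisely where the global cardinality constraint makes the argument more delicate than in the unconstrained case.

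Finally I would optimize the ratio $s^{\ast}(\rho,p)/c(\rho,p)$ over $\rho < 0$ and $p \in (0,1)$ and verify that the optimum equals $\beta^{cc}_{cut} \approx 0.858$, and moreover that the extremal $(\rho,p)$ together with the extremal bias distribution $\nu$ reproduces the integrality-gap configuration underlying the Raghavendra--Tan $\alpha^{cc}_{cut}$-approximation, so that the hardness is tight. The main obstacle, I expect, is twofold, and both parts live in the soundness step: (i) pushing the influence-decoding and invariance machinery through with a single \emph{global} constraint rather than per-coordinate folding---one cannot fold a long code to a fixed bias, so the constraint must be imposed globally and only afterwards ``distributed'' back onto the individual tables, which forces the envelope relaxation above; and (ii) pinning down the extremal $(\rho,p)$ and the extremal $\nu$ precisely enough to certify the specific constant $\approx 0.858$, rather than merely exhibiting some ratio strictly below $\alpha_{GW}$.
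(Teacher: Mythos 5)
Your overall architecture is the same as the paper's: compose a $q$-biased, $\rho$-correlated long-code gadget with a Unique Games instance, use biased dictators for completeness (meeting the cardinality constraint exactly and cutting a $2q(1-q)(1-\rho)$ fraction of edges), decode influential coordinates in soundness, pass to Gaussian thresholds via the biased Majority-is-Stablest theorem, and finally optimize over $(q,\rho)$ with $\rho$ at the boundary of the admissible correlation range to certify $\approx 0.858$ numerically. However, the step you yourself flag as the main obstacle --- handling the fact that the per-vertex biases $\tau_v$ need only average to $q$, which you encode in the envelope $s^{\ast}(\rho,p)=\max\{\mathbf{E}_{a,b\sim\nu}[\Gamma_\rho(a,b)]:\mathbf{E}_\nu=p\}$ --- is left unresolved, and this is precisely the one nontrivial idea in the soundness analysis. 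The paper resolves it by first averaging the indicator functions over the neighborhood of each $\mathcal{U}$-vertex, so that each Unique Games vertex $u$ contributes a single-bias stability term $\mathbb{S}_\rho(S_u)\ge\Gamma_\rho(\mu_u)-\varepsilon$, and then invoking convexity of $\Gamma_\rho$ (Jensen) to get $\mathbf{E}_u[\Gamma_\rho(\mu_u)]\ge\Gamma_\rho(r)$ where $r=\mathbf{E}_u[\mu_u]$ is the global measure of the set. In other words, your envelope collapses to the constant-bias value; without proving this collapse (or an equivalent concavity statement for the per-edge cut bound) you cannot conclude that the soundness value is governed by $\Gamma_\rho(q)$ and $\Gamma_\rho(1-q)$, and the claimed constant $0.858$ is not certified.

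A second, related issue is the direction of your Gaussian bound. The quantity $\Gamma_\rho(\tau_u,\tau_w)$ is a quadrant (``both endpoints inside the set'') probability, and the invariance-principle conclusion is a \emph{lower} bound on such quantities for low-influence sets; it does not by itself upper bound the cut, and the ratio $s^{\ast}(\rho,p)/c(\rho,p)$ as you wrote it would not evaluate to $0.858$. What the paper proves in soundness is that every set of measure $r$ has internal edge weight at least $\Gamma_\rho(r)-\varepsilon$ (a density statement, valid for all $r$ simultaneously), applied both to the chosen side $S$ and to its complement; the cut bound then follows from the identity $w(S,S^c)=1-w(S,S)-w(S^c,S^c)\le 1-\Gamma_\rho(q)-\Gamma_\rho(1-q)+2\varepsilon$, and the hardness ratio is $\bigl(1-\Gamma_\rho(q)-\Gamma_\rho(1-q)\bigr)/\bigl(2q(1-q)(1-\rho)\bigr)$. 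So you need both corrections: apply the stability lower bound to $S$ and $S^c$ and convert via the above identity, and then use convexity of $\Gamma_\rho$ to reduce the varying-bias case to the constant bias $q$. With those two steps supplied, your plan matches the paper's proof; as written, the soundness argument is incomplete at exactly the point where the cardinality constraint is exploited.
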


\begin{thm} \label{basic_max-k-vc}
  For every $\varepsilon > 0$, Max-$k$-VC is UG-hard to approximate within $\beta^{cc}_{vc} +\varepsilon$, where $\beta^{cc}_{vc} \approx 0.929$. 
\end{thm}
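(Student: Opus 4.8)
The plan is to prove Theorem~\ref{basic_max-k-vc} by a Unique Games reduction in the dictatorship-test-and-composition style of Khot--Kindler--Mossel--O'Donnell and Austrin, engineered so that the instances it outputs are \emph{monotone} $2$-Sat instances (clauses $x_u\vee x_v$, no negations) carrying a global cardinality constraint. The conceptual point is that unconstrained monotone Max-$2$-Sat is trivial and the folding device one normally uses to balance long codes in the Max-$2$-Sat hardness proof would create negations; here the requirement that exactly a $\mu$ fraction of variables be true, with $\mu\in(0,1)$ a parameter to be optimized, takes over the role of folding. Since a monotone Max-$2$-Sat instance with a cardinality constraint \emph{is} a Max-$k$-VC instance --- the true variables are the cover, and an edge is covered precisely when its clause is satisfied --- such a reduction automatically proves hardness of Max-$k$-VC with $k=\mu n$.

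First I would build the dictatorship test. Take the domain to be $\{-1,1\}^R$ under the $\mu$-biased product measure, so that a dictator $f(x)=\mathbf{1}[x_i=1]$ has mean exactly $\mu$ and satisfies the cardinality constraint on the nose. Clauses are generated from a symmetric correlated pair $x,y$ --- or, to match the algorithmic constant precisely, from a family of correlated configurations as in Raghavendra's framework, though a single anti-correlated pair already exhibits the mechanism --- with clause $f(x)\vee f(y)$; writing $\bar f=1-f$ (mean $1-\mu$), the satisfied fraction equals $1-\langle\bar f,T_\rho\bar f\rangle$. For completeness, a dictator satisfies a $c(\mu)$ fraction, computed directly from the pair distribution. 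For soundness, I would invoke the invariance principle (Mossel--O'Donnell--Oleszkiewicz, in Mossel's ``Gaussian bounds for noise correlation'' form) to replace, up to $o_\tau(1)$, any assignment whose low-degree influences are all below $\tau$ by a $[0,1]$-valued function of the same mean on Gaussian space, and then Borell's isoperimetric inequality (in the direction making parallel halfspaces the extremizers for $\rho<0$) to bound the resulting Gaussian noise correlation below by the halfspace value $\Lambda_\rho$. This gives a soundness bound $s(\mu)<c(\mu)$, the strict gap being exactly the advantage a dictator enjoys by concentrating its dependence on one coordinate, which no spread-out mean-$\mu$ set can match.

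Then I would carry out the standard composition: take a hard (regular) Unique Games instance $\mathcal{U}$, place one copy of the test domain on each right-vertex, lay permuted and slightly noised copies of the clause distribution across each $\mathcal{U}$-edge, and retain the cardinality constraint, so the global instance is a Max-$k$-VC instance with $k=\mu n$. Completeness: a good labeling yields a dictator of mean exactly $\mu$ on every copy, hence value $c(\mu)-o(1)$. Soundness: an assignment of value above the soundness bound must carry an influential coordinate on a noticeable fraction of copies --- otherwise the per-copy bound, averaged appropriately over $\mathcal{U}$-edges, caps the value --- and influence-decoding then produces a labeling of $\mathcal{U}$ of value bounded away from $0$, contradicting the UGC. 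Optimizing the ratio over $\mu$ (and the configuration distribution) yields UG-hardness of approximating Max-$k$-VC within $\beta^{cc}_{vc}+\varepsilon$, and one checks that this optimum equals the ratio $\alpha^{cc}_{2sat}\approx 0.929$ of the Raghavendra--Tan algorithm --- as it must, the hard instance doubling as an SDP integrality gap.

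The step I expect to be the main obstacle is the interplay between the \emph{exact} global cardinality constraint and the \emph{per-copy} soundness analysis. An adversary may use a different local mean $\mu_w$ on each copy subject only to $\mathrm{avg}_w\,\mu_w=\mu$, so the true soundness is not $1-\Lambda_\rho(1-\mu,1-\mu)$ but is obtained by minimizing the edge-average of $\Lambda_\rho(1-\mu_w,1-\mu_{w'})$ over all such local means; it is therefore governed by a convexification of the bivariate function $\Lambda_\rho$ forced by the cardinality constraint. Carrying this convexification correctly through both the gadget and the composition, checking that the optimizing $(\mu,\rho)$ stays in the range of correlations admissible on the $\mu$-biased cube (the worst $\mu$ is bounded away from $1/2$, the bisection case being approximable within $\alpha_{LLZ}$, which is exactly why a single anti-correlated pair does not suffice and more general configurations are needed), and verifying that the resulting value is \emph{exactly} $\alpha^{cc}_{2sat}$ is where the real work lies --- all while keeping every clause monotone, since folding is off the table.
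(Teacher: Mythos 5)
Your overall strategy is the same as the paper's: a $q$-biased dictatorship test built from a $\rho$-correlated pair of $q$-biased bits, composed with Unique Games so that the output is a monotone cardinality-constrained 2-Sat instance (equivalently a Max-$k$-VC instance), completeness from dictator sets of measure $q$, soundness from a biased-cube ``Majority is Stablest''/Borell-type bound $\Gamma_\rho$, and optimization over $(q,\rho)$ matching the Raghavendra--Tan constant. However, the step you explicitly defer --- reconciling the single global cardinality constraint with the per-copy soundness bound when the adversary uses different local means $\mu_w$ averaging to $\mu$ --- is left genuinely open in your proposal, and the route you sketch for it does not work as stated. You propose to control the edge-average of the bivariate quantity $\Gamma_\rho(1-\mu_w,1-\mu_{w'})$ by a ``convexification'' of $\Gamma_\rho$; but $\Gamma_\rho(x,y)$ is not jointly convex (already at $\rho=0$ it is the bilinear function $xy$), so one cannot simply push the averaging inside, and it is not clear a priori that the minimum over local means with fixed global average is attained at the constant profile.

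The paper's resolution is a symmetrization that your per-edge framing misses: every test clause has both endpoints adjacent to the same left Unique Games vertex $u$, so after averaging over the permutations the contribution decomposes as an expectation over $u$ of the noise stability of a \emph{single} averaged function $S_u$, whose mean is $\mu_u$. Borell/``Majority is Stablest'' then only needs the one-function, diagonal bound $\mathbb{S}_\rho(S_u)\ge \Gamma_\rho(\mu_u)-\varepsilon$, and since the diagonal $x\mapsto\Gamma_\rho(x,x)$ \emph{is} convex for $\rho<0$, the regularity of the UG instance gives $\mathop{\mathbf{E}}_u[\Gamma_\rho(\mu_u)]\ge\Gamma_\rho(\mathop{\mathbf{E}}_u[\mu_u])$ with no loss; the varying local means therefore cause no convexification penalty at all. (Concretely, the paper phrases this as: in the soundness case every vertex set of measure $r$ spans edge weight at least $\Gamma_\rho(r)-\varepsilon$, so a set of measure $q$ covers at most $1-\Gamma_\rho(1-q)+\varepsilon$.) Two smaller points: your parenthetical claim that a single anti-correlated pair ``does not suffice and more general configurations are needed'' is not needed for this theorem --- the constant $\beta^{cc}_{vc}\approx 0.929$ is already attained by one pair distribution at $q^*\approx 0.365$ with the extremal $\rho=-q^*/(1-q^*)$ (configuration families would only matter for the hardness curve at other values of $q$); and the solution that must be played in the completeness case is the dictator/independent-set side of the cut, of measure exactly $q$, covering weight $q+q(1-q)(1-\rho)-O(\gamma)$, which is what produces the stated ratio.
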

Note that since CC-Max-Cut and Max-$k$-VC are special cases of CC-Max-$2$-Lin and CC-Max-$2$-Sat respectively, the corresponding hardness results apply to the latter problems as well.

The constants $\beta^{cc}_{vc}$ and $\beta^{cc}_{cut}$ are calculated numerically and their estimated values match the constants $\alpha^{cc}_{2sat}$ and $\alpha^{cc}_{cut}$, which are the approximation ratios for corresponding problems achieved by the algorithm of Raghavendra and Tan \cite{DBLP:conf/soda/RaghavendraT12}. We provide even stronger evidence that these constants match each other, by showing that $\beta^{cc}_{vc}$ and $\beta^{cc}_{cut}$ are calculated as minima of the same functions used for calculating their counterparts $\alpha^{cc}_{vc}$ and $\alpha^{cc}_{cut}$, but over a slightly more restricted domain. 
\par 
Moreover, in Section~\ref{reduction_section} we give refined statements of Theorem \ref{basic_max-cut-sol} and Theorem \ref{basic_max-k-vc} which describe inapproximability of these problems as a function of the cardinality constraint $q \in (0,1)$, which specifies the fraction of variables that need to be set to true. For now, we provide a visualization of these results in Figure~\ref{plot_cc_max_cut}.
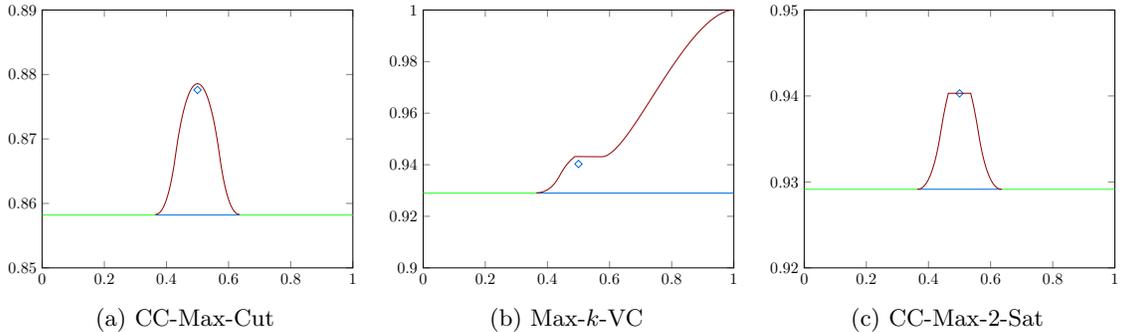
\begin{figure}
  \begin{subfigure}{0.33\textwidth}
    \resizebox{\linewidth}{!}{
          \begin{tikzpicture}
      \begin{axis} [
       xmin=0,xmax=1,
       ymin=0.85,ymax=0.89,
       ]

       \draw[color={rgb:blue,10}] 
          (axis cs:0.365,0.8582 ) -- (axis cs:0.635,0.8582);

       \draw[color={rgb:green,10}] 
          (axis cs:0.0,0.8582 ) -- (axis cs:0.365,0.8582);
       \draw[color={rgb:green,10}] 
          (axis cs:0.635,0.8582 ) -- (axis cs:1.0,0.8582);
        \node[diamond,scale=0.35,draw=blue] at (axis cs:0.5,0.8776) {};

       \draw[color={rgb:red,10}] 
          (axis cs:0.364,0.858296795399) -- 
          (axis cs:0.368,0.858306261673) -- 
          (axis cs:0.372,0.858370262688) -- 
          (axis cs:0.376,0.858491405152) -- 
          (axis cs:0.38,0.858672539132) -- 
          (axis cs:0.384,0.858916787655) -- 
          (axis cs:0.388,0.859227581103) -- 
          (axis cs:0.392,0.859608697354) -- 
          (axis cs:0.396,0.860064308877) -- 
          (axis cs:0.4,0.860599038298) -- 
          (axis cs:0.404,0.861218024352) -- 
          (axis cs:0.408,0.861927000668) -- 
          (axis cs:0.412,0.862732390552) -- 
          (axis cs:0.416,0.863641421881) -- 
          (axis cs:0.42,0.864662267525) -- 
          (axis cs:0.424,0.865804218546) -- 
          (axis cs:0.428,0.867077899934) -- 
          (axis cs:0.432,0.868435564544) -- 
          (axis cs:0.436,0.869694812816) -- 
          (axis cs:0.44,0.87084961795) -- 
          (axis cs:0.444,0.87190708042) -- 
          (axis cs:0.448,0.872873023617) -- 
          (axis cs:0.452,0.873752345392) -- 
          (axis cs:0.456,0.874549130586) -- 
          (axis cs:0.46,0.875266876331) -- 
          (axis cs:0.464,0.875908549825) -- 
          (axis cs:0.468,0.876476657689) -- 
          (axis cs:0.472,0.876973345851) -- 
          (axis cs:0.476,0.877400425848) -- 
          (axis cs:0.48,0.877759379562) -- 
          (axis cs:0.484,0.878051466445) -- 
          (axis cs:0.488,0.878277650918) -- 
          (axis cs:0.492,0.878438694165) -- 
          (axis cs:0.496,0.878535102257) -- 
          (axis cs:0.5,0.878567206395) -- 
          (axis cs:0.504,0.878535102257) -- 
          (axis cs:0.508,0.878438694165) -- 
          (axis cs:0.512,0.878277650918) -- 
          (axis cs:0.516,0.878051466445) -- 
          (axis cs:0.52,0.877759379562) -- 
          (axis cs:0.524,0.877400425848) -- 
          (axis cs:0.528,0.876973345851) -- 
          (axis cs:0.532,0.876476657689) -- 
          (axis cs:0.536,0.875908549825) -- 
          (axis cs:0.54,0.875266876331) -- 
          (axis cs:0.544,0.874549130586) -- 
          (axis cs:0.548,0.873752345392) -- 
          (axis cs:0.552,0.872873023617) -- 
          (axis cs:0.556,0.87190708042) -- 
          (axis cs:0.56,0.87084961795) -- 
          (axis cs:0.564,0.869694812816) -- 
          (axis cs:0.568,0.868435564544) -- 
          (axis cs:0.572,0.867077899934) -- 
          (axis cs:0.576,0.865804218546) -- 
          (axis cs:0.58,0.864662267525) -- 
          (axis cs:0.584,0.863641421881) -- 
          (axis cs:0.588,0.862732390552) -- 
          (axis cs:0.592,0.861927000668) -- 
          (axis cs:0.596,0.861218024352) -- 
          (axis cs:0.6,0.860599038298) -- 
          (axis cs:0.604,0.860064308877) -- 
          (axis cs:0.608,0.859608697354) -- 
          (axis cs:0.612,0.859227581103) -- 
          (axis cs:0.616,0.858916787655) -- 
          (axis cs:0.62,0.858672539132) -- 
          (axis cs:0.624,0.858491405152) -- 
          (axis cs:0.628,0.858370262688) -- 
          (axis cs:0.632,0.858306261673) -- 
          (axis cs:0.636,0.858296795399) ;
      \end{axis}
    \end{tikzpicture}
    }
    \caption{CC-Max-Cut}
    \label{fig intro cc-max-cut}
  \end{subfigure}
  \begin{subfigure}{0.33\textwidth}
    \resizebox{\linewidth}{!}{
        \begin{tikzpicture}
    \begin{axis} [
     xmin=0,xmax=1,
     ymin=0.9,ymax=1.0,
     ]
     \draw[color={rgb:blue,10}] 
      (axis cs:0.365,0.929) -- (axis cs:1.0,0.929);
     \draw[color={rgb:green,10}] 
     (axis cs:0.0,0.929) -- (axis cs:0.365,0.929);
     
     \node[diamond,scale=0.35,draw=blue] at (axis cs:0.5,0.9403) {};

     \draw[color={rgb:red,10}] 
        (axis cs:0.364,0.929148492172) -- 
        (axis cs:0.368,0.929153221764) -- 
        (axis cs:0.372,0.929185218647) -- 
        (axis cs:0.376,0.929245786167) -- 
        (axis cs:0.38,0.929336349347) -- 
        (axis cs:0.384,0.929458469693) -- 
        (axis cs:0.388,0.929613862383) -- 
        (axis cs:0.392,0.929804416345) -- 
        (axis cs:0.396,0.930032217798) -- 
        (axis cs:0.4,0.930299578039) -- 
        (axis cs:0.404,0.930609066418) -- 
        (axis cs:0.408,0.930963549726) -- 
        (axis cs:0.412,0.931366239593) -- 
        (axis cs:0.416,0.931820749927) -- 
        (axis cs:0.42,0.93233116713) -- 
        (axis cs:0.424,0.932902136693) -- 
        (axis cs:0.428,0.933538971061) -- 
        (axis cs:0.432,0.93424778547) -- 
        (axis cs:0.436,0.935035671087) -- 
        (axis cs:0.44,0.935910906052) -- 
        (axis cs:0.444,0.936783924708) -- 
        (axis cs:0.448,0.937595309592) -- 
        (axis cs:0.452,0.938350303593) -- 
        (axis cs:0.456,0.939052938621) -- 
        (axis cs:0.46,0.939704817652) -- 
        (axis cs:0.464,0.940310425404) -- 
        (axis cs:0.468,0.940871563383) -- 
        (axis cs:0.472,0.941389257127) -- 
        (axis cs:0.476,0.941866508244) -- 
        (axis cs:0.48,0.942304612434) -- 
        (axis cs:0.484,0.942704723633) -- 
        (axis cs:0.488,0.943167858755) -- 
        (axis cs:0.576,0.943101014406) -- 
        (axis cs:0.58,0.943158183027) -- 
        (axis cs:0.584,0.943274866722) -- 
        (axis cs:0.588,0.943445784351) -- 
        (axis cs:0.592,0.943666259645) -- 
        (axis cs:0.596,0.943932128873) -- 
        (axis cs:0.6,0.944239665777) -- 
        (axis cs:0.604,0.944585519977) -- 
        (axis cs:0.608,0.944966666028) -- 
        (axis cs:0.612,0.945380360953) -- 
        (axis cs:0.616,0.945824108595) -- 
        (axis cs:0.62,0.9462956295) -- 
        (axis cs:0.624,0.946792835314) -- 
        (axis cs:0.628,0.947313806908) -- 
        (axis cs:0.632,0.947856775569) -- 
        (axis cs:0.636,0.948420106763) -- 
        (axis cs:0.64,0.949002286057) -- 
        (axis cs:0.644,0.94960190684) -- 
        (axis cs:0.648,0.950217659593) -- 
        (axis cs:0.652,0.950848322461) -- 
        (axis cs:0.656,0.951492752938) -- 
        (axis cs:0.66,0.952149880527) -- 
        (axis cs:0.664,0.95281870021) -- 
        (axis cs:0.668,0.953498266648) -- 
        (axis cs:0.672,0.954187688996) -- 
        (axis cs:0.676,0.954886126261) -- 
        (axis cs:0.68,0.955592783139) -- 
        (axis cs:0.684,0.956306906262) -- 
        (axis cs:0.688,0.957027780813) -- 
        (axis cs:0.692,0.957754727464) -- 
        (axis cs:0.696,0.958487099599) -- 
        (axis cs:0.7,0.959224280791) -- 
        (axis cs:0.704,0.959965682503) -- 
        (axis cs:0.708,0.960710741988) -- 
        (axis cs:0.712,0.961458920369) -- 
        (axis cs:0.716,0.962209700879) -- 
        (axis cs:0.72,0.962962587245) -- 
        (axis cs:0.724,0.963717102199) -- 
        (axis cs:0.728,0.964472786106) -- 
        (axis cs:0.732,0.965229195692) -- 
        (axis cs:0.736,0.965985902873) -- 
        (axis cs:0.74,0.966742493663) -- 
        (axis cs:0.744,0.967498567164) -- 
        (axis cs:0.748,0.968253734614) -- 
        (axis cs:0.752,0.969007618513) -- 
        (axis cs:0.756,0.969759851798) -- 
        (axis cs:0.76,0.970510077063) -- 
        (axis cs:0.764,0.971257945841) -- 
        (axis cs:0.768,0.972003117914) -- 
        (axis cs:0.772,0.972745260673) -- 
        (axis cs:0.776,0.973484048499) -- 
        (axis cs:0.78,0.97421916219) -- 
        (axis cs:0.784,0.974950288409) -- 
        (axis cs:0.788,0.975677119155) -- 
        (axis cs:0.792,0.97639935126) -- 
        (axis cs:0.796,0.977116685912) -- 
        (axis cs:0.8,0.977828828184) -- 
        (axis cs:0.804,0.978535486589) -- 
        (axis cs:0.808,0.979236372646) -- 
        (axis cs:0.812,0.979931200457) -- 
        (axis cs:0.816,0.980619686293) -- 
        (axis cs:0.82,0.981301548195) -- 
        (axis cs:0.824,0.981976505572) -- 
        (axis cs:0.828,0.982644278814) -- 
        (axis cs:0.832,0.9833045889) -- 
        (axis cs:0.836,0.983957157013) -- 
        (axis cs:0.84,0.984601704149) -- 
        (axis cs:0.844,0.985237950737) -- 
        (axis cs:0.848,0.985865616237) -- 
        (axis cs:0.852,0.986484418746) -- 
        (axis cs:0.856,0.987094074595) -- 
        (axis cs:0.86,0.98769429793) -- 
        (axis cs:0.864,0.988284800289) -- 
        (axis cs:0.868,0.988865290156) -- 
        (axis cs:0.872,0.989435472511) -- 
        (axis cs:0.876,0.989995048344) -- 
        (axis cs:0.88,0.990543714163) -- 
        (axis cs:0.884,0.991081161461) -- 
        (axis cs:0.888,0.991607076164) -- 
        (axis cs:0.892,0.992121138034) -- 
        (axis cs:0.896,0.992623020045) -- 
        (axis cs:0.9,0.993112387694) -- 
        (axis cs:0.904,0.993588898282) -- 
        (axis cs:0.908,0.994052200114) -- 
        (axis cs:0.912,0.994501931642) -- 
        (axis cs:0.916,0.994937720521) -- 
        (axis cs:0.92,0.995359182574) -- 
        (axis cs:0.924,0.995765920646) -- 
        (axis cs:0.928,0.99615752333) -- 
        (axis cs:0.932,0.99653356354) -- 
        (axis cs:0.936,0.99689359691) -- 
        (axis cs:0.94,0.997237159968) -- 
        (axis cs:0.944,0.997563768062) -- 
        (axis cs:0.948,0.997872912961) -- 
        (axis cs:0.952,0.99816406007) -- 
        (axis cs:0.956,0.998436645155) -- 
        (axis cs:0.96,0.998690070452) -- 
        (axis cs:0.964,0.998923699973) -- 
        (axis cs:0.968,0.999136853754) -- 
        (axis cs:0.972,0.999328800681) -- 
        (axis cs:0.976,0.999498749306) -- 
        (axis cs:0.98,0.999645835758) -- 
        (axis cs:0.984,0.999769107204) -- 
        (axis cs:0.988,0.999867497994) -- 
        (axis cs:0.992,0.999939792516) -- 
        (axis cs:0.996,0.999984559607) --
        (axis cs:1.0,1.0) ;
    \end{axis}
  \end{tikzpicture}
    }
    \caption{Max-$k$-VC}
    \label{fig intro max-k-vc}
  \end{subfigure}
  \begin{subfigure}{0.33\textwidth}
    \resizebox{\linewidth}{!}{
      		\begin{tikzpicture}

        \begin{axis} [
         xmin=0,xmax=1,
         ymin=0.92,ymax=0.95,
         ]

         \draw[color={rgb:blue,10}] 
          (axis cs:0.365,0.929148492172) -- (axis cs:0.635,0.929148492172);
         \draw[color={rgb:green,10}] 
         (axis cs:0.0,0.929148492172) -- (axis cs:0.365,0.929148492172);
         \draw[color={rgb:green,10}] 
         (axis cs:0.635,0.929148492172) -- (axis cs:1.0,0.929148492172);
        
         \node[diamond,scale=0.35,draw=blue] at (axis cs:0.5,0.9403) {};
         \draw[color={rgb:red,10}] 
            (axis cs:0.364,0.929148492172) -- 
            (axis cs:0.368,0.929153221764) -- 
            (axis cs:0.372,0.929185218647) -- 
            (axis cs:0.376,0.929245786167) -- 
            (axis cs:0.38,0.929336349347) -- 
            (axis cs:0.384,0.929458469693) -- 
            (axis cs:0.388,0.929613862383) -- 
            (axis cs:0.392,0.929804416345) -- 
            (axis cs:0.396,0.930032217798) -- 
            (axis cs:0.4,0.930299578039) -- 
            (axis cs:0.404,0.930609066418) -- 
            (axis cs:0.408,0.930963549726) -- 
            (axis cs:0.412,0.931366239593) -- 
            (axis cs:0.416,0.931820749927) -- 
            (axis cs:0.42,0.93233116713) -- 
            (axis cs:0.424,0.932902136693) -- 
            (axis cs:0.428,0.933538971061) -- 
            (axis cs:0.432,0.93424778547) -- 
            (axis cs:0.436,0.935035671087) -- 
            (axis cs:0.44,0.935910906052) -- 
            (axis cs:0.444,0.936783924708) -- 
            (axis cs:0.448,0.937595309592) -- 
            (axis cs:0.452,0.938350303593) -- 
            (axis cs:0.456,0.939052938621) -- 
            (axis cs:0.46,0.939704817652) -- 
            (axis cs:0.464,0.940310425404) -- 
            (axis cs:0.536,0.940310425404) -- 
            (axis cs:0.54,0.939704817652) -- 
            (axis cs:0.544,0.939052938621) -- 
            (axis cs:0.548,0.938350303593) -- 
            (axis cs:0.552,0.937595309592) -- 
            (axis cs:0.556,0.936783924708) -- 
            (axis cs:0.56,0.935910906052) -- 
            (axis cs:0.564,0.935035671087) -- 
            (axis cs:0.568,0.93424778547) -- 
            (axis cs:0.572,0.933538971061) -- 
            (axis cs:0.576,0.932902136693) -- 
            (axis cs:0.58,0.93233116713) -- 
            (axis cs:0.584,0.931820749927) -- 
            (axis cs:0.588,0.931366239593) -- 
            (axis cs:0.592,0.930963549726) -- 
            (axis cs:0.596,0.930609066418) -- 
            (axis cs:0.6,0.930299578039) -- 
            (axis cs:0.604,0.930032217798) -- 
            (axis cs:0.608,0.929804416345) -- 
            (axis cs:0.612,0.929613862383) -- 
            (axis cs:0.616,0.929458469693) -- 
            (axis cs:0.62,0.929336349347) -- 
            (axis cs:0.624,0.929245786167) -- 
            (axis cs:0.628,0.929185218647) -- 
            (axis cs:0.632,0.929153221764) -- 
            (axis cs:0.636,0.929148492172) ;
      \end{axis}
    \end{tikzpicture}
    }
    \caption{CC-Max-$2$-Sat}
    \label{fig intro cc-max-2-sat}
  \end{subfigure}
  \caption{Hardness ratio (${\color{red}-}$), approximation ratio ({\color{blue}$-$},{\color{blue}$\diamond$}), and matching \mbox{algorithm}/hardness ratio (${\color{green}-}$) as a function of the cardinality constraint $q$ for the three problems.}
  \label{plot_cc_max_cut}
\end{figure}
\par
\paragraph{Overview of proof ideas}

The main observation behind the hardness results is that the reduction used to prove hardness of approximation for the Independent Set and Vertex Cover problems in bounded degree graphs \cite{DBLP:journals/toc/AustrinKS11} gives very strong soundness guarantees.  In particular it shows that in the ``no'' case of the reduction, all induced subgraphs of the graph contain many edges, which in turn gives useful upper bounds on the number of edges cut by a bipartition of a given size, or the number of edges covered by a subgraph. This is also how \cite{DBLP:conf/soda/Manurangsi19} obtained the previous hardness of $\approx 0.944$ for Max-$k$-VC.
Thus our results use essentially the same reduction as \cite{DBLP:journals/toc/AustrinKS11} (which is in turn similar to the reduction for Max-Cut \cite{DBLP:journals/siamcomp/KhotKMO07}).  Note however that even though the graph produced by that reduction has a small vertex cover in the ``yes'' case, using that small vertex cover is not necessarily the best solution for the Max-$k$-VC problem on the graph.  In particular for $q < 1/2$, it makes more sense to instead use the large independent set as the Max-$k$-VC solution in the yes case (the intuition being that since it is independent, it covers many edges relative to its size).
\par 
Another difference is that we have somewhat greater flexibility in choosing the noise distribution of our ``dictatorship test'' (the key component of essentially all UG-hardness results)
The reason is that for Independent Set/Vertex Cover, the reduction needs ``perfect completeness'', i.e., in the ``yes'' case it needs to produce graphs with large independent sets/small vertex covers, whereas for e.g.~Max-$k$-VC we are perfectly happy with graphs where there are sets of size $k$ covering many, but not necessarily all, edges.
This increased flexibility turns out to improve the hardness ratios for some range of the cardinality constraint $q$.
For example, for the CC-Max-Cut problem with $q=1/2$, this allows us to recover the $\alpha_{GW}$-hardness for the Max-Bisection problem using the same reduction. However, at $q$ further away from $1/2$, and in particular at the minima in Figure~\ref{fig intro cc-max-cut}, it turns out that this flexibility does not help.
E.g.~in the global minimum at $q \approx 0.365$ for Max-$k$-VC, the reduction does output a graph with a large independent set containing a $q$ fraction of the vertices, and choosing that independent set is the optimal solution for the Max-$k$-VC instance.  Similarly, in the part of the range $q > 1/2$ where the function is increasing, the optimal solution to the Max-$k$-VC instance in the yes case is to pick an actual vertex cover of size $q$, and the first point of the curve in this range corresponds exactly to the hardness of $0.944$ from \cite{DBLP:conf/soda/Manurangsi19}.

\paragraph{Organization}
This paper is organized as follows. In Section \ref{preliminaries_section} we fix the notation, recall some well-known facts, and formally introduce the problems of interest. In Section \ref{reduction_section} we give our main hardness reduction and improved inapproximability results.  In Section \ref{RT_section} we give a brief overview of the algorithm of Raghavendra and Tan \cite{DBLP:conf/soda/RaghavendraT12} in order to observe that the hardness ratios we get match the approximation ratios of the algorithm. Finally, in Section \ref{discussion_section} we propose some possible directions for future research.

\section{Preliminaries} \label{preliminaries_section}
\subsection{Notational Conventions}

In this paper we work with undirected (multi)graphs $G = (V, E)$. For a set $S \subseteq V$ of vertices we use $S^c$ to denote its complement $S^c = V \setminus S$, and write $U \sqcup V$ for a disjoint union of sets $U$ and $V$. 
The graphs are both edge and vertex weighted and the weights of vertices and edges are given by functions $w\colon V \rightarrow [0,1]$, and $w\colon E \rightarrow [0,1]$. For subsets $S \subseteq V$ and $K \subseteq E$ we interpret $w(S)$ and $w(K)$ as the sum of weights of vertices contained in $S$ and edges in $K$, respectively. Furthermore, weights are normalized so that $w(V) = w(E) =1 $ and the weight of each vertex equals half the weight of all edges adjacent to it. Therefore, the weights of edges and vertices can be interpreted as probability distributions, and sampling a vertex with probability equal to its weight is the same as sampling an edge and then sampling one of its endpoints with probability $1/2$. For $S,T \subseteq V$, we write $w(S,T)$ for the total weight of edges from $E$ which have one endpoint in $S$, and other in $T$. Note that, since we work with undirected graphs, the order of endpoints is not important, and therefore $w(S,T) = w(T,S)$.
In other words, the weight of an edge $e=(u,v)$ contributes to $w(S,T)$ if either $(u,v) \in T\times S$ or $(u,v) \in S \times T$. We also have the identity
\begin{equation}
  \label{eqn:weight relation}
  w(S, V) = w(S) + \frac{1}{2} w(S, S^c).
\end{equation}
The set of all neighbours of a vertex $v$ including $v$ is denoted by $N(v)$, and the set of all neighbours of a set $S \subseteq V$ including $S$ is denoted by $N(S)$. Let us also introduce the following definition.
\begin{defn}
  A graph $G$ is $(q,\varepsilon)$-dense if every subset $S \subseteq V$ with $w(S) = q$ satisfies $w(S,S) \geq \varepsilon$.
\end{defn}

We use $\phi(x) = \frac{1}{\sqrt{2\pi}}e^{-x^2/2}  $ to denote the density function of a standard normal random variable, and $\Phi(x) = \int_{-\infty} ^{x} \phi(y)dy$ to denote its cumulative distribution function (CDF). We also work with bivariate normal random variables, and to that end introduce the following function.

\begin{defn}
  Let $\rho \in [-1,1] $, and consider two jointly normal random variables $X,Y$ with mean $0$ and covariance matrix $\Cov(X,Y) = \begin{bmatrix}1  & \rho \\ \rho  & 1   \end{bmatrix}$. We define $\Gamma_{\rho} \colon [0,1]^2 \rightarrow [0,1] $ as 
  \begin{equation*}
    \Gamma_{\rho}(x,y) = \Pr \left [ X \leq \Phi^{-1}(x) \wedge Y \leq \Phi^{-1}(y) \right ]. 
  \end{equation*}
\end{defn}
We also write $\Gamma_{\rho}(x) = \Gamma_{\rho}(x,x)$. We have the following basic lemma (for a proof see Appendix A of \cite{DBLP:conf/soda/AustrinBG13}).
\begin{lemma} \label{bivariate_lemma}
  For every $\rho \in [-1,1] $, and every $x,y \in [0,1] $, we have 
  \begin{equation*}
    \Gamma_{\rho}(x,y)= \Gamma_{\rho}(1-x,1-y) -1+x+y.
  \end{equation*}
\end{lemma}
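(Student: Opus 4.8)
The plan is to prove the identity via the reflection symmetry of the centered bivariate Gaussian combined with a complementary-events (inclusion--exclusion) computation. Write $a = \Phi^{-1}(x)$ and $b = \Phi^{-1}(y)$, and let $(X,Y)$ be the jointly normal pair from the definition of $\Gamma_\rho$, so that $\Gamma_\rho(x,y) = \Pr[X \le a \wedge Y \le b]$. The first thing I would record is that $\Phi$ is symmetric about the origin, i.e.\ $\Phi(-t) = 1 - \Phi(t)$, which gives $\Phi^{-1}(1-x) = -a$ and $\Phi^{-1}(1-y) = -b$; hence $\Gamma_\rho(1-x,1-y) = \Pr[X \le -a \wedge Y \le -b]$.

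Next I would use that the distribution of $(X,Y)$ is invariant under the map $(u,v) \mapsto (-u,-v)$, since this leaves the mean ($0$) and the covariance matrix unchanged, and therefore leaves the density unchanged. Consequently $\Pr[X \le -a \wedge Y \le -b] = \Pr[-X \le -a \wedge -Y \le -b] = \Pr[X \ge a \wedge Y \ge b]$, so that $\Gamma_\rho(1-x,1-y) = \Pr[X \ge a \wedge Y \ge b]$.

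Finally, for $x,y \in (0,1)$ the marginals of $X$ and $Y$ are continuous, so the boundary events $\{X = a\}$ and $\{Y = b\}$ have probability zero and I may freely switch between strict and non-strict inequalities. Applying inclusion--exclusion to the event $\{X \ge a \wedge Y \ge b\}$ (equivalently, to its complement) yields $\Pr[X \ge a \wedge Y \ge b] = 1 - \Pr[X \le a] - \Pr[Y \le b] + \Pr[X \le a \wedge Y \le b] = 1 - x - y + \Gamma_\rho(x,y)$. Combining this with the previous paragraph and rearranging gives $\Gamma_\rho(x,y) = \Gamma_\rho(1-x,1-y) - 1 + x + y$.

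The only point requiring mild care is the degenerate corners: when $x$ or $y$ lies in $\{0,1\}$ the value $\Phi^{-1}$ is $\pm\infty$, and when $\rho = \pm 1$ the pair $(X,Y)$ is supported on a line. In these cases I would either verify the identity directly (both sides reduce to the obvious value) or extend from the open square by continuity of $\Gamma_\rho$ in $(x,y)$. Since a full proof is already available in the reference cited above, I expect the write-up to be just a few lines, and this corner-case bookkeeping is the main --- and quite minor --- obstacle.
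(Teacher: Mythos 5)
Your proof is correct: the reflection symmetry $(X,Y)\mapsto(-X,-Y)$ of the centered Gaussian together with inclusion--exclusion and the continuity of the marginals is exactly the standard argument, and your handling of the corner cases $x,y\in\{0,1\}$ and $\rho=\pm1$ is fine. The paper itself does not prove the lemma but defers to Appendix A of the cited reference, where the argument is essentially the same as yours, so there is nothing further to reconcile.
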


\subsection{Problem Definitions}
This paper is concerned with Max-Cut, Max-$2$-Lin, Max-$2$-Sat, and Max-$k$-VC problems with cardinality constraints. Let us give the definitions of these problems as integer optimization programs now. In these definitions instead of $\{0,1\}$ we represent Boolean domain as $\{-1,1\}$, and for that reason instead of cardinality constraint $q$ we consider a \emph{balance constraint} $r= 1- 2q$.

\begin{defn}\label{Max-Cut_definition}
  An instance $\mathcal{F}$ of the cardinality constrained Max-$2$-Lin (\emph{CC-Max-$2$-Lin}) problem with balance constraint $r \in (-1,1)$ over variables $X=\{x_1,\hdots,x_{n}\}$ taking values in $\{-1,1\}$  is given by the following integer optimization program
  \begin{equation*}\label{max_e2lin_integer_equation}
    \begin{split}
      \max \quad &  \sum\limits_{(i,j)=e_{\ell} \in E } \frac{1+P_{\ell} x_i x_j}{2}, \\
      \textrm{s.t.} \quad & \sum\limits_{i \in V } x_i = nr,
    \end{split}
  \end{equation*}
  where $P_{\ell} \in \{-1,1\}$ and the term $(1+P_{\ell}x_ix_j)/2$ corresponds to the XOR constraint $x_i  x_j =P_{\ell}$. In case $P_{\ell}=-1 $ for all $\ell$, the integer optimization program is an instance of \emph{CC-Max-Cut} problem.
\end{defn}

\begin{defn}\label{max-2-sat-cc_definition}
  An instance $\mathcal{F}$ of the cardinality constrained Max-$2$-Sat (\emph{CC-Max-$2$-Sat}) problem with balance constraint $r \in (-1,1)$ over variables $X=\{x_1,\hdots,x_{n}\}$ taking values in $\{-1,1\}$ is given by the following integer optimization program
  \begin{equation*} \label{max2_sat_integer_equation}
    \begin{split}
      \max \quad & \sum\limits_{(i,j)=e_{\ell} \in E } \frac{3+P_{\ell}^1 x_i+P_{\ell}^2x_j+P_{\ell}^{3}x_ix_j}{4}, \\
      \textrm{s.t.} \quad & \sum\limits_{i \in V } x_i = nr,
    \end{split}
  \end{equation*}
  where $(P_{\ell}^1, P_{\ell}^2, P_{\ell}^3) \in \{ (-1,-1,-1)$, $(1,-1,1),$ $(-1,1,1),$ $(1,1,-1) \}$ corresponds to one of the four possible clauses
  \begin{equation*}
    x_i \vee x_j, \quad \neg x_i \vee x_j, \quad x_i \vee \neg x_j, \quad \neg x_i \vee \neg x_j.
  \end{equation*}
  In case $(P_{\ell}^1, P_{\ell}^2, P_{\ell}^3)=(-1,-1,-1)$ for all $\ell$, the integer optimization program is an instance of \emph{Max-$k$-VC} problem.
\end{defn}

The objective in the problems given by Definitions \ref{Max-Cut_definition} and \ref{max-2-sat-cc_definition} is to find an assignment $z\colon X \rightarrow \{-1,1\}$ which satisfies a (hard) global cardinality constraint and maximizes the number of satisfied soft constraints represented by the objective function. For an assignment $z$ that satisfies global constraint of an instance $\mathcal{F}$ we use $\val_z(\mathcal{F})$ to denote the value of the objective function under the assignment $z$. Furthermore, we use
\[\optval(\mathcal{F}) = \max_{\substack{z\colon X \rightarrow \{-1,1\}\\\sum_{x \in X} z(x) = rn}} \val_z(\mathcal{F})\]
to denote the maximum value of the objective function over all assignments $z$ satisfying the cardinality constraint. 
\par 
The starting point of the hardness results in this paper is the Unique Games problems, which is defined as follows.
\begin{defn}
  A \emph{Unique Games instance} $\Lambda = (\mathcal{U},\mathcal{V},\mathcal{E},\Pi,[L])$ consists of an unweighted bipartite multigraph 
  $(\mathcal{U} \sqcup \mathcal{V},\mathcal{E})$, a set 
  $\Pi = \{\pi_{e}\colon [L] \to [L] \mid e \in
  \mathcal{E} \textrm{ and }\pi_e \textrm{ is a bijection} \} $ of permutation constraints, and a set [L] of labels. The value of $\Lambda$ under the assignment $z\colon \mathcal{U} \sqcup \mathcal{V} \to [L] $ is the fraction of edges satisfied, where an edge $e=(u,v), u \in \mathcal{U}, v \in \mathcal{V}$ is satisfied if $\pi_e(z(u)) = z(v)$. We write $\val_c(\Lambda)$ for the value of $\Lambda$ under $z$, and $\opt(\Lambda)$ for the maximum possible value over all assignments $z$.
\end{defn}

The Unique Games Conjecture \cite{DBLP:conf/stoc/Khot02a} can be formulated as follows (\cite{DBLP:conf/coco/KhotR03}, Lemma 3.4).
\begin{conj}[Unique Games Conjecture] 
  For every constant $\gamma > 0$ there is a sufficiently large $L \in \mathbb{N}$, such that for a Unique Games instance $\Lambda = (\mathcal{U},\mathcal{V},\mathcal{E},\Pi,[L]) $ with a regular bipartite graph $(\mathcal{U} \sqcup \mathcal{V}, \mathcal{E})$, it is NP-hard to distinguish between
  \begin{itemize}
    \item $\opt(\Lambda) \geq 1-\gamma$, 
    \item $\opt(\Lambda) \leq \gamma$.
  \end{itemize}
\end{conj}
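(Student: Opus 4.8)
The plan is to \emph{derive} this formulation from Khot's original Unique Games Conjecture, which is exactly the content of Lemma~3.4 of \cite{DBLP:conf/coco/KhotR03}, so I will only sketch the argument. Khot's conjecture already supplies, for every $\gamma > 0$ and all sufficiently large label sets $[L]$, instances of Unique Games whose constraints are bijections and for which it is NP-hard to distinguish $\opt(\Lambda) \geq 1-\gamma$ from $\opt(\Lambda) \leq \gamma$ (one simply takes the completeness error to be $\gamma$ as well). Hence the only extra feature that has to be arranged is that the bipartite constraint multigraph $(\mathcal{U} \sqcup \mathcal{V}, \mathcal{E})$ be regular.

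To obtain regularity I would proceed in two steps. First, pass from an instance with rational edge weights to an unweighted multigraph \emph{without altering the value of any assignment}: scale all edge weights to a common denominator and replace an edge of integer weight $m$ by $m$ parallel copies carrying the same permutation constraint. Second, ``blow up'' the resulting multigraph to make it $d$-regular for a suitable $d$, by cloning vertices and duplicating edges so that (i) every vertex acquires degree exactly $d$, and (ii) the uniform distribution on the edges of the new instance lies within statistical distance $O(\gamma)$ of the original edge distribution. Property (i) is the regularity we want, and property (ii) guarantees that the value of every assignment shifts by at most $O(\gamma)$, so completeness stays $\geq 1 - O(\gamma)$ and soundness stays $\leq O(\gamma)$; a final re-parametrization of $\gamma$ absorbs these losses and yields exactly the stated dichotomy. (Many presentations of the conjecture build the regular product structure in from the outset, in which case this reduction is not needed at all.)

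I expect the only delicate point to be property (ii): a careless per-vertex duplication of edges re-weights the instance and can inflate the value of an otherwise poor assignment, thereby destroying soundness, so the blow-up must be arranged so that no edge's relative weight grows by more than a $1 + O(\gamma)$ factor. Completeness, by contrast, needs no care whatsoever, since the constant assignment (all vertices given the same label, which remains a valid labelling after any blow-up) has the same value before and after. Bijectivity of the constraints and the size of the label set are inherited verbatim from the instance supplied by the original conjecture.
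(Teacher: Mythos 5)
This statement is a \emph{conjecture}: the paper does not prove it (and cannot), it simply adopts this regular-bipartite formulation as its hardness assumption and justifies the formulation by citing Lemma~3.4 of Khot--Regev, which is exactly the route you take, so in that sense your proposal matches the paper's treatment. Two remarks on your sketch of the derivation itself. First, your claim that completeness ``needs no care whatsoever, since the constant assignment \dots has the same value before and after'' is wrong as stated: a constant labelling $z(u)=z(v)=i$ satisfies an edge $e$ only when $\pi_e(i)=i$, which the bijections $\pi_e$ need not guarantee, so a constant assignment is not a witness of near-satisfiability; completeness is instead preserved for the same reason soundness is, namely your property (ii) that every assignment's value shifts by at most $O(\gamma)$ when the edge distribution is perturbed by $O(\gamma)$ in statistical distance. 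Second, the actual content of the regularization (making the bipartite multigraph exactly regular while keeping the edge distribution essentially unchanged, e.g.\ by replicating vertices in proportion to their degrees and rerouting parallel edges) is precisely what the cited Khot--Regev lemma carries out; your sketch gestures at it but leaves that construction unspecified, which is fine given that you, like the paper, ultimately lean on the citation rather than on the sketch.
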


\subsection{Analysis of Boolean Functions}
One of the ubiquitous tools in the hardness of approximation area is Fourier analysis of Boolean functions. We now recall some of the well-known facts which are used in the paper. For a more detailed study, we refer to \cite{DBLP:books/daglib/0033652}.
\par 
For $q\in [0,1]$ and $n \in \mathbb{N}$ we write $\pi_q\colon \{0,1\} \to [0,1] $ for the probability distribution given by $\pi_q(1)=q, \pi_q(0)=1-q$. We also write $\pi_q^{\otimes n}$ for the probability distribution on $n$-bit strings $x \in \{0,1\}^n$ where each bit is distributed according to $\pi_q$, independently. We use $L^2(\pi_q^{\otimes n})$ to denote the space of random variables $f\colon \{0,1\}^n  \to \mathbb{R}$ over the probability space $\left( \{0,1\}^n, P(\{0,1\}^n), \pi_q^{\otimes n } \right)$, and interpret $\mathop{\mathbf{E}}[f] $ and $\mathop{\mathbf{Var}}[f] $ as expectation and variance of $f(X)$ when the $X$ is drawn from $\pi_q^{\otimes n}$. Depending on context, the elements of $L^2(\pi_q^{\otimes n})$ will be interpreted as functions as well.
\par 
Let us now introduce some of the common objects used in the study of Boolean functions.
\begin{defn}
  Consider a function $f \in L^2(\pi_q^{\otimes n})$ and $i \in \{1,\hdots,n\}$. The \emph{influence} $\mathbf{Inf}_i[f]$ of the $i$-th argument on $f$ is defined as 
  \begin{equation*}
    \mathbf{Inf}_i[f] = \mathbf{E}_{x \sim \pi_q^{\otimes n} } [ \mathbf{Var} _{\tilde{x}_i \sim \pi_q} [ f(x_1,\hdots,x_{i-1},\tilde{x}_i,x_{i+1},\hdots,x_n)] ]. 
  \end{equation*}
\end{defn}

Minimal correlation between two $q$-biased bits is $\max(-q/(1-q),-(1-q)/q)$. For notational convenience, let us introduce the function $\kappa$  which assigns to each value $q \in (0,1)$ an interval $I \subseteq (-1,0)$ as
\begin{equation*}
  \kappa(q) = \begin{cases}
               [-q/(1-q),0), & \text{if } q<1/2, \\
               (-1,0),       & \text{if } q = 1/2, \\
               [-(1-q)/q,0), & \text{if } q >1/2.
              \end{cases}
\end{equation*}

\begin{defn}
  For a fixed $x\in \{0,1\}, q \in (0,1)$ and $\rho \in \kappa(q)$ we write $y \sim N_{\rho}(x)$ to indicate that $y$ is a $\rho$-correlated copy of $x$.  In particular each bit $y_i$ is equal to $1$ with probability $q+\rho(1-q)$ if $x_i=1$, and $y_i=1$ with probability $q-\rho q $ when $x_i = 0$, independently.
\end{defn}

\begin{defn}
  Consider $q \in (0,1)$ and $\rho \in \kappa(q)$. The \emph{noise operator} $T_{\rho} \colon L^2(\pi_q^{\otimes n}) \to L^2( \pi_q^{\otimes n})$ is defined as 
  \begin{equation*}
    T_{\rho} f (x) = \mathbf{E}_{y \sim N_{\rho}(x)}[f(y)].
  \end{equation*}
\end{defn}

The following lemma gives a useful bound on the number of influential variables of $T_{\rho}f$. 
\begin{lemma} \label{inf_bound_lemma}
  Consider $q \in (0,1)$, a function $f \in L^2(\pi_q^{\otimes n})$, and $\rho \in \kappa(q)$. Then, for any $\tau > 0 $ we have that
  \begin{equation*}
    | \{ i \in [n] \mid \mathbf{Inf}_i[T_{\rho} f] \geq \tau \} | \leq \frac{\mathbf{Var}[f] }{\tau e \ln(1/|\rho|)}.
  \end{equation*}
\end{lemma}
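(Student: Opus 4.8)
The plan is to run the standard Fourier-analytic argument over the $q$-biased product space. Let $f=\sum_{S\subseteq[n]}f^{=S}$ be the orthogonal (Efron--Stein / Fourier--Walsh) decomposition of $f$ in $L^2(\pi_q^{\otimes n})$, where $f^{=S}$ depends only on the coordinates in $S$ and $\mathbf{E}[f^{=S}\mid x_j:j\neq i]=0$ whenever $i\in S$; see \cite{DBLP:books/daglib/0033652}. Orthogonality of the pieces gives $\mathbf{Var}[f]=\sum_{S\neq\emptyset}\|f^{=S}\|_2^2$. I would first record two standard facts. (i) For any $g\in L^2(\pi_q^{\otimes n})$ one has $\mathbf{Inf}_i[g]=\sum_{S\ni i}\|g^{=S}\|_2^2$: expanding $\mathbf{Var}_{\tilde x_i}[g(\dots,\tilde x_i,\dots)]$ in the decomposition, all pieces $g^{=S}$ with $i\notin S$ are constant in $\tilde x_i$ and drop out, while the surviving pieces are pairwise orthogonal after taking $\mathbf{E}_x$. (ii) The noise operator acts diagonally, $(T_\rho f)^{=S}=\rho^{|S|}f^{=S}$: since $T_\rho$ is a tensor product of single-coordinate operators it suffices to check that on one $q$-biased coordinate $T_\rho$ fixes constants and multiplies the one-dimensional space of mean-zero functions by $\rho$, a direct computation from the definition of $N_\rho(x)$ — and this is exactly where $\rho\in\kappa(q)$ is used, so that $N_\rho(x)$ is a genuine distribution. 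Combining (i) and (ii) yields $\mathbf{Inf}_i[T_\rho f]=\sum_{S\ni i}\rho^{2|S|}\|f^{=S}\|_2^2$.

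Next I would sum over all coordinates and interchange the order of summation:
\[
\sum_{i=1}^n \mathbf{Inf}_i[T_\rho f] \;=\; \sum_{S\subseteq[n]} |S|\,\rho^{2|S|}\,\|f^{=S}\|_2^2 \;\le\; \Big(\sup_{k\ge 1} k\,|\rho|^{2k}\Big)\sum_{S\neq\emptyset}\|f^{=S}\|_2^2 \;=\; \Big(\sup_{k\ge 1} k\,|\rho|^{2k}\Big)\,\mathbf{Var}[f],
\]
using that the $S=\emptyset$ term vanishes and that $|\rho|<1$ since $\rho\in\kappa(q)\subseteq(-1,0)$. To bound the supremum, write $|\rho|^{2k}=e^{-ak}$ with $a=2\ln(1/|\rho|)>0$ and maximize the smooth function $x\mapsto xe^{-ax}$ over $x\ge 0$: the maximum is attained at $x=1/a$ with value $1/(ae)$. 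Hence $\sup_{k\ge 1}k\,|\rho|^{2k}\le \frac{1}{2e\ln(1/|\rho|)}\le\frac{1}{e\ln(1/|\rho|)}$, so $\sum_i\mathbf{Inf}_i[T_\rho f]\le \frac{\mathbf{Var}[f]}{e\ln(1/|\rho|)}$.

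Finally I would close with an averaging (Markov) step: if $m$ coordinates satisfy $\mathbf{Inf}_i[T_\rho f]\ge\tau$, then $m\tau\le\sum_i\mathbf{Inf}_i[T_\rho f]\le\frac{\mathbf{Var}[f]}{e\ln(1/|\rho|)}$, which rearranges to the claimed bound $m\le\frac{\mathbf{Var}[f]}{\tau e\ln(1/|\rho|)}$. The only genuinely delicate point is item (ii): verifying that $T_\rho$ has eigenvalue $\rho^{|S|}$ on the level-$S$ part in the biased setting, which needs a little care with the negative range of $\rho$ permitted by $\kappa(q)$ and with the normalization of the single-coordinate Fourier basis; everything else is bookkeeping plus a one-variable optimization. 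Alternatively, one could simply quote this spectral fact directly from \cite{DBLP:books/daglib/0033652}.
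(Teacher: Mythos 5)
Your proof is correct and is exactly the standard spectral argument: the paper itself does not prove this lemma but simply cites Lemma 3.4 of \cite{DBLP:conf/focs/GuruswamiMR08}, where the same computation (diagonal action of $T_\rho$ on the orthogonal decomposition, summing influences, optimizing $k|\rho|^{2k}$, then Markov) is carried out. In fact your bound $\sup_{k\ge 1} k|\rho|^{2k}\le \frac{1}{2e\ln(1/|\rho|)}$ gives the stated inequality with a factor $2$ to spare, so nothing further is needed.
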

For a proof we refer to Lemma 3.4 of \cite{DBLP:conf/focs/GuruswamiMR08}. We also need to introduce the notion of noise stability, defined as follows.

\begin{defn}
  Let $q\in (0,1), \rho \in \kappa(q)$ and $f \in L^2(\pi_q^{\otimes n})$. The \emph{noise stability} of function $f$ at $\rho$ is defined as
  \begin{equation*}
    \mathbb{S}_{\rho} = \mathbf{E}[ f \cdot T_{\rho}f] .
  \end{equation*}
\end{defn}

Let us also recall the following variant of the ``Majority is Stablest'' theorem in the form that appeared in \cite{DBLP:journals/toc/AustrinKS11}, and which follows from Theorem 3.1 in \cite{DBLP:journals/siamcomp/DinurMR09}.
\begin{thm} \label{thresholds_are_stablest}
  Let $q\in (0,1)$ and $\rho \in \kappa(q)$. Then for any $\varepsilon > 0 $, there exist $\tau>0$ and $\delta > 0 $ such that for every function $f \in L^2(\pi_q^{\otimes n})$, $f\colon \{-1,1\}^n \rightarrow [0,1]$ that satisfies
  \begin{equation*}
    \max_{i\in [n] } \mathbf{Inf}_i[T_{1-\delta}f] \leq \tau,
  \end{equation*}
  we have 
  \begin{equation*}
    \mathbb{S}_{\rho}(f) \geq \Gamma_{\rho}(\mathbf{E}[f] ) - \varepsilon.
  \end{equation*}
\end{thm}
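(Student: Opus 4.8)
This is a ``Majority is Stablest''-type statement for the $q$-biased cube; the plan is to obtain it from the invariance principle together with Borell's Gaussian isoperimetric inequality. Since it is attributed to Theorem~3.1 of \cite{DBLP:journals/siamcomp/DinurMR09} (restated in \cite{DBLP:journals/toc/AustrinKS11}), the delicate quantifier management is packaged there, and in the paper I would cite it after matching conventions --- the bias $q$, the domain written as $\{-1,1\}^n$, and the range $\rho\in\kappa(q)$, which is precisely what makes $T_\rho$ an honest Markov operator on $L^2(\pi_q^{\otimes n})$ and hence $\Gamma_\rho$ the correct Gaussian comparison. For completeness, here is the structure of the argument.

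The starting point is the identity $\mathbb{S}_\rho(f) = \mathbb{S}_{\rho'}(g)$, where $g := T_{1-\delta}f$ and $\rho' := \rho/(1-\delta)^2$; indeed $T_\rho = T_{1-\delta}\,T_{\rho'}\,T_{1-\delta}$ and $T_{1-\delta}$ is self-adjoint, so $\langle f, T_\rho f\rangle = \langle g, T_{\rho'}g\rangle$. For $\delta$ small (relative to the distance from $\rho$ to the endpoints of $\kappa(q)$; the endpoint case follows by a limiting argument) we have $\rho'\in\kappa(q)$ with $|\rho'-\rho|\to0$, while $\mathbf{E}[g]=\mathbf{E}[f]$, $g$ is $[0,1]$-valued, $\max_i\mathbf{Inf}_i[g]\le\tau$ by hypothesis, and --- crucially --- $g$ is essentially low-degree: with $g^{\le k}$ the degree-$\le k$ part and $g^{>k}:=g-g^{\le k}$, one has $\|g^{>k}\|_2^2 \le (1-\delta)^{2k}$, which is negligible once $k=k(\delta)\to\infty$ with $\delta k\to\infty$. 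Writing $\mathbb{S}_{\rho'}(g) = \mathbf{E}[g(X)g(Y)]$ for a $\rho'$-correlated pair $(X,Y)$ of $\pi_q^{\otimes n}$-strings, it therefore suffices to lower bound $\mathbf{E}[g^{\le k}(X)\,g^{\le k}(Y)]$ (the remaining terms contribute at most $(1-\delta)^{2k}$ in absolute value, since $|\rho'|<1$), and we note that $g^{\le k}$ is within $L^2$-distance $(1-\delta)^k$ of the $[0,1]$-valued function $g$.

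Now invoke the (biased, two-function) invariance principle of Mossel, O'Donnell and Oleszkiewicz, as used in \cite{DBLP:journals/siamcomp/DinurMR09}: since the degree-$\le k$ multilinear polynomial $g^{\le k}$ has all influences at most $\approx\tau$, its value distribution on i.i.d.\ $\pi_q$ bits is within an error $\eta(\tau,k)\to0$ of its distribution on i.i.d.\ standard Gaussians; consequently $\mathbf{E}[g^{\le k}(X)g^{\le k}(Y)]$ differs by at most $O(\eta)$ from $\mathbf{E}[\tilde g(G)\tilde g(H)]$, where $\tilde g\colon\mathbb{R}^m\to[0,1]$ is the truncation to $[0,1]$ of the Gaussian polynomial with the same coefficients, $(G,H)$ is a $\rho'$-correlated pair of standard Gaussian vectors, and $|\mathbf{E}[\tilde g]-\mathbf{E}[f]|\le O(\eta)$. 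Finally apply Borell's inequality: for $\rho'\in(-1,0)$, among all $\tilde g\colon\mathbb{R}^m\to[0,1]$ with a fixed mean $\mu$, the quantity $\mathbf{E}[\tilde g(G)\tilde g(H)]$ is minimized by a halfspace indicator $\mathbf{1}[x_1\le\Phi^{-1}(\mu)]$, for which it equals $\Gamma_{\rho'}(\mu)$ by definition. Chaining the estimates, and using joint continuity of $(\rho,\mu)\mapsto\Gamma_\rho(\mu)$ to absorb $\rho'\to\rho$, yields $\mathbb{S}_\rho(f)\ge\Gamma_\rho(\mathbf{E}[f])-\varepsilon$ once $\tau$ (hence $k$, hence $\delta$) is small enough.

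The main obstacle is not any individual step --- the identity $\mathbb{S}_\rho(f)=\mathbb{S}_{\rho'}(T_{1-\delta}f)$, the invariance principle, and Borell's theorem are all standard --- but the simultaneous calibration of parameters: one must fix them in the order $\varepsilon\to\tau\to k\to\delta$ so that the degree-truncation error $(1-\delta)^k$, the invariance error $\eta(\tau,k)$, the $[0,1]$-truncation error, and the continuity slack from $\rho'\to\rho$ are all below $\varepsilon$, while keeping $\rho'$ inside $\kappa(q)$. This bookkeeping, together with the passage to biased product spaces (Efron--Stein decomposition, biased hypercontractivity), is exactly what the cited theorem encapsulates, and in the write-up I would simply cite it after spelling out the (routine) dictionary between the two formulations.
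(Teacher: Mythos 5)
The paper does not prove this statement at all---it is simply recalled from \cite{DBLP:journals/toc/AustrinKS11} and attributed to Theorem~3.1 of \cite{DBLP:journals/siamcomp/DinurMR09}---so your plan to cite it after matching conventions is exactly what the paper does, and your sketch of the underlying argument (the identity $\mathbb{S}_\rho(f)=\mathbb{S}_{\rho'}(T_{1-\delta}f)$ with $\rho'=\rho/(1-\delta)^2$, degree truncation, the two-function invariance principle, and the negative-$\rho$ form of Borell's inequality) is the standard and essentially correct route. The only blemish is your closing remark on quantifier order: the dependency chain should be $\varepsilon\to\delta\to k\to\tau$ (consistent with your own earlier $k=k(\delta)$, $\delta k\to\infty$), since for fixed $k$ the truncation error $(1-\delta)^{2k}$ gets worse, not better, as $\delta\to 0$.
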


\section{Hardness Reduction} \label{reduction_section}
In this section we give our main hardness reduction.  As discussed in the introduction, it is a generalization of the reduction of Theorem III.1 from \cite{DBLP:journals/toc/AustrinKS11}.
\begin{thm}\label{hardness_cc_max_cut}
  For every $q \in (0,1), \varepsilon > 0$, and $\rho \in \kappa(q)$, there exists a $\gamma > 0 $ and a reduction from Unique Games instances $\Lambda=(\mathcal{U},\mathcal{V},\mathcal{E},\Pi,[L])$ to weighted multigraphs $G = (V, E)$ with the following properties:
  \begin{itemize}
  \item \emph{Completeness:} If $\opt(\Lambda) \geq 1-\gamma$, then there is a set $S \subseteq V$ such that $w(S) = q$ and $w(S,S^c) \geq 2q(1-q)(1-\rho) - 2\gamma$.
    \item \emph{Soundness:} If $\opt(\Lambda) \leq \gamma$, then for every $r \in [0,1]$, $G$ is $(r, \Gamma_{\rho}(r)-\varepsilon)$-dense.
  \end{itemize}
  Moreover, the running time of the reduction is polynomial in $|\mathcal{U}|,|\mathcal{V}|,|\mathcal{E}|$, and exponential in $L$.
  \begin{proof}
    Let $\nu\colon \{0,1\}^2 \rightarrow [0,1]$ be the probability distribution over two $\rho$-correlated $q$-biased bits.  In other words, letting $t= (q-q^2)(1-\rho)$, we have
    \begin{equation*} 
      \nu(0,0)=1-q-t, \quad \nu(0,1) = \nu(1,0) = t, \quad \nu(1,1)= q-t.
    \end{equation*}
    \par 
    Let us now describe how the multigraph $G$ can be constructed from $\Lambda$. We define the vertex set of $G$ to be $V = \mathcal{V} \times \{0,1\}^L = \{(v,x) \mid v \in \mathcal{V}, x \in \{0,1\}^L\}$. In particular, for every vertex $v \in \mathcal{V}$ we create $2^L$ vertices of $G$, which we identify with $L$-bit strings in $\{0,1\}^L$. We also write $v^x$ for a vertex $(v,x)$ of the graph $G$. The weights of vertices in $G$ are given by
    \begin{equation}
      w(v^x) = \frac{1}{|\mathcal{V}|} \pi_q^{\otimes L}(x).
    \end{equation}
    The edges of $G$ are constructed in the following way. For every $u\in \mathcal{U}$, and for every two $v_1,v_2\in N(u)$, we create an edge between vertices $v_1^x, v_2^y$ with weight
    \begin{equation*}
      \frac{1}{|\mathcal{U}|D^2} \nu^{\otimes L}(x \circ \pi_{e_1}, y \circ \pi_{e_2}), \quad  \textrm{where } e_1 = (u,v_1), \quad e_2 = (u,v_2).
    \end{equation*}
    \par 
    Expressed formally, the edge set $E$ is
    \begin{equation*}
      E = \{(e_1^x,e_2^y) \mid e_1 = (u,v_1), e_2 = (u,v_2), u \in \mathcal{U}, v_1,v_2 \in \mathcal{V}, x,y \in \{ 0, 1 \}^L \}.
    \end{equation*}
    Since the marginal of the distribution $\nu$ over either the first or the second argument is a $q$-biased distribution on $\{0,1\}^L$, the weight of all edges adjacent to a vertex $v^x$ equals two times the weight of the vertex $v^x$. Furthermore, it is trivial to check that $w(V) = w(E) = 1 $.
    The number of vertices in $G$ is $|\mathcal{V}|2^L$, and the number of edges is $|\mathcal{U}|D^22^L$, so the construction is indeed polynomial in $|\mathcal{U}|,|\mathcal{V}|$ and $|\mathcal{E}|$. 
    \par
    Let us now prove completeness and soundness.
    \par 
    \emph{Completeness:}
    Since $\opt(\Lambda) \geq 1-\gamma$, there is a labeling $z \colon \mathcal{U} \sqcup \mathcal{V} \to [L] $ such that $\val_z(\Lambda)\geq 1-\gamma$. Consider a set $S$ given by
    \begin{equation*}
      S = \{ v^x \in V \mid x_{z(v)} = 1 \}.
    \end{equation*}
    The weight of the set $S$ is obviously $q$. Let us consider a set consisting of pairs of edges in $\mathcal{E}$ which have a common vertex in $\mathcal{U}$, i.e. the set
    \begin{equation*}
      \hat{E} =\{(e_1,e_2) \in \mathcal{E}\times \mathcal{E} \mid e_1 = (u,v_1), e_2=(u,v_2), u\in \mathcal{U}, v_1,v_2 \in \mathcal{V} \},
    \end{equation*}
    and its subset $\hat{E}_{\text{good}}$ consisting of edge pairs which are satisfied under the assignment $z$, or formally
    \begin{equation*}
      \hat{E}_{\text{good}} = \{(e_1,e_2) \in \hat{E} \mid e_1 = (u,v_1), e_2=(u,v_2), z(u) = \pi^{-1}_{e_1}(z(v_1)) = \pi^{-1}_{e_2}(z(v_2)) \},
    \end{equation*}
    Since at least fraction $1-\gamma$ of edges in $\mathcal{E}$ are satisfied under $z$, at least fraction $(1-\gamma)^2$ of edge pairs in $\hat{E}$ is satisfied under $z$, i.e. $|\hat{E}_{\text{good}}|\geq (1-\gamma)^2 |\hat{E}|$. For every $(e_1,e_2) \in \hat{E}_{\text{good}}, e_1=(u,v_1), e_2 = (u,v_2)$, the edges between $S$ and $S^c$ created through the pair of edges $(e_1,e_2)$ have the total weight of 
    \begin{equation*}
      \begin{split}
        \frac{1}{|\mathcal{U}|D^2} \Pr_{(x,y) \sim \nu^{\otimes L} } \left [ (x \circ \pi_{e_1^{-1}})_{z(v_1)} \neq (y \circ \pi_{e_2^{-1}})_{z(v_2)} \right ]  & 
        = \frac{1}{|\mathcal{U}|D^2} \Pr_{(x,y) \sim \nu^{\otimes L} } \left [ x_{z(u)} \neq y_{z(u)} \right ]  \\ & =      \frac{1}{|\mathcal{U}|D^2} \left(\nu(0,1)+\nu(1,0) \right)=\frac{1}{|\mathcal{U}|D^2}2t.
      \end{split}
    \end{equation*} 
    Therefore, we have $w(S,S^c) \geq 2t(1-\gamma)^2 \ge 2q(1-q)(1-\rho) - 2 \gamma$. 
    \par 
    \emph{Soundness:} 
    Let us assume by contradiction that $G$ is not $\left(r,\Gamma_{\rho}(r)-\varepsilon\right)$-dense, and therefore that there is a set $S \subseteq V$ of weight $w(S) = r$ for which $w(S,S) < \Gamma_\rho(r) - \varepsilon$. For each $v \in \mathcal{V}$, let us define a function $S_v \in L^2(\pi_q^{\otimes L})$ to be the indicator function of $S$ restricted to the vertex $v$. In particular, we have that $S_v(x) = 1 $ if and only if $v^x \in S$. Furthermore, for all $u \in \mathcal{U}$ let us define $S_u \in L^2(\pi_q^{\otimes L})$ as
    \begin{equation*}
      S_u(x) = \mathop{\mathbf{E}}\limits_{\substack{e=(u,v),  \\ v \in N(u) }} [ S_v(x \circ \pi_e^{-1}) ] .
    \end{equation*}
  We have that 
  \begin{equation*}
    \begin{split}
      w(S,S) & = \mathop{\mathbf{E}}\limits_{\substack{u \in \mathcal{U}, \\ e_1=(u,v_1), e_2 = (u,v_2)\\ v_1,v_2 \in N(u) }} \left [ \mathop{\mathbf{E}}_{(x,y) \sim \nu^{\otimes L}}[ S_{v_1}(x \circ \pi_{e_1}^{-1}) S_{v_2}(y \circ \pi_{e_2}^{-1} ) ]  \right ]   \\
             & = \mathop{\mathbf{E}}\limits_{\substack{u \in \mathcal{U}, \\ (x,y) \sim \nu^{\otimes L} }} \left [ \mathop{\mathbf{E}}_{\substack{e_1=(u,v_1), e_2 = (u,v_2)\\ v_1,v_2 \in N(u)}}[ S_{v_1}(x \circ \pi_{e_1}^{-1}) S_{v_2}(y \circ \pi_{e_2}^{-1} ) ]  \right ]    \\ 
             & = \mathop{\mathbf{E}}\limits_{\substack{u \in \mathcal{U} }} \left [ \mathop{\mathbf{E}}_{(x,y) \sim \nu^{\otimes L} }[S_u(x) S_u(y)] \right] = \mathop{\mathbf{E}}\limits_{\substack{u \in \mathcal{U} }} \left [  \mathop{\mathbf{E}}_{x \sim \pi_q^{\otimes L}}[S_u(x) T_{\rho} S_u(x)] \right ]  = \mathop{\mathbf{E}}\limits_{\substack{u \in \mathcal{U} }} [\mathbb{S}_{\rho}(S_u) ] .
    \end{split}
  \end{equation*}
  Let us define $\mu_u = \mathop{\mathbf{E}}_{x \sim \pi_q^{\otimes L} } \left [ S_u(x) \right] $, and remark that due to regularity of $\Lambda$ we have $\mathop{\mathbf{E}}_{u \in U } \left [ S_u \right] = r$.  We claim that there is a set $\mathcal{U}' \subseteq \mathcal{U}, |\mathcal{U}'| \geq \varepsilon |\mathcal{U}|/2$ such that for every $u \in \mathcal{U}'$ we have $\mathbb{S}_{\rho}(S_u) < \Gamma_{\rho}(\mu_u)-\varepsilon/2$. Otherwise, we reach a contradiction by noticing that
    \begin{equation*} 
      \Gamma_{\rho}(r) - \varepsilon >  w(S,S)  =  \mathop{\mathbf{E}}\limits_{\substack{u \in \mathcal{U} }} [\mathbb{S}_{\rho}(S_u) ]  \geq (1-\varepsilon/2 ) \left ( \mathop{\mathbf{E}}_{u \in U} \left [ \Gamma_{\rho}(\mu_u)\right ] -\varepsilon/2 \right) \geq \mathop{\mathbf{E}}_{u \in U} \left [  \Gamma_{\rho}(\mu_u) \right ]  - \varepsilon \geq \Gamma_{\rho}(r) - \varepsilon,
    \end{equation*}
    where in the last inequality we used the fact that $\Gamma_{\rho}$ is convex. 
    \par  
    By Theorem  \ref{thresholds_are_stablest} there is $\tau>0$ and $\delta>0$ such that for every $u \in \mathcal{U}'$ there is a significant coordinate $i \in [L]$ for which $\mathbf{Inf}_i[T_{1-\delta} S_u] \geq \tau$. For each $u \in \mathcal{U}'$ and for its significant coordinate $i$, by using the fact that $\mathbf{Inf}_i$ is convex and Markov's inequality we conclude that for at least $\tau/2$ of $v \in N(u)$ we have 
    \begin{equation*}
      \mathbf{Inf}_{\pi_e(i)} [ T_{1-\delta} S_v ] \geq \tau /2 , \quad e=(u,v).
    \end{equation*}
    For each $v \in \mathcal{V}$ let $[L]_v \subseteq [L] $ denote a set of labels defined by
    \begin{equation*}
      [L]_v = \{ i \in [L] \mid \mathbf{Inf}_{i} [ T_{1-\delta} S_v ] \geq \tau /2 \}.
    \end{equation*}
    By Lemma \ref{inf_bound_lemma} we have that $|[L]_v| \leq \frac{2}{\tau e \ln(1/(1-\delta) )}$. Let us now pick an assignment $z\colon \mathcal{U} \sqcup \mathcal{V} \to [L] $ of $\Lambda$ using the following randomized procedure. For each $v \in \mathcal{V}$, pick $i \in [L]_v$ randomly, and set $z(v)=i$. If $[L]_v = \emptyset$, we pick $i\in [L]$ randomly. Then, for each $u \in \mathcal{U}$, we set $z(u)=i$ for the $i$ that maximizes the number of edges satisfied. From the previous discussion we conclude that this labeling satisfies $\Omega\left(\varepsilon \tau^4 \ln^2(1/(1-\delta)) \right)$ of constraints of $\Lambda$ in expectation. But since this constant does not depend on $\gamma$ this would be a contradiction if we started with a sufficiently small $\gamma$.
  \end{proof}
\end{thm}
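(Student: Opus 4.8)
The plan is to run the familiar ``dictatorship test composed with Unique Games'' template, taking as the dictatorship gadget the $q$-biased long code with noise distribution $\nu$ equal to two $\rho$-correlated $q$-biased bits. Concretely, set $t=q(1-q)(1-\rho)$ and observe that, because $\rho\in\kappa(q)$, the table $\nu(0,0)=1-q-t$, $\nu(0,1)=\nu(1,0)=t$, $\nu(1,1)=q-t$ is a bona fide probability distribution on $\{0,1\}^2$ whose two marginals both equal $\pi_q$. I would then build the multigraph $G$ with vertex set $\mathcal V\times\{0,1\}^L$, weight $w(v^x)=\tfrac{1}{|\mathcal V|}\pi_q^{\otimes L}(x)$, and, for every $u\in\mathcal U$ (of degree $D$) and every ordered pair $v_1,v_2\in N(u)$, an edge between $v_1^x$ and $v_2^y$ of weight $\tfrac{1}{|\mathcal U|D^2}\,\nu^{\otimes L}(x\circ\pi_{e_1},\,y\circ\pi_{e_2})$ where $e_i=(u,v_i)$. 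A short bookkeeping computation using that the marginals of $\nu$ are $q$-biased shows $w(V)=w(E)=1$ and that the total edge weight incident to a vertex equals twice its vertex weight, matching the paper's conventions; the claimed size/running-time bounds are immediate since $G$ has $|\mathcal V|2^L$ vertices and $|\mathcal U|D^22^L$ edges.

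For completeness, given a labeling $z$ satisfying at least a $1-\gamma$ fraction of the edges of $\Lambda$, I would take $S=\{v^x : x_{z(v)}=1\}$, which has weight exactly $q$. For an edge pair $(e_1,e_2)$ with $e_1=(u,v_1)$, $e_2=(u,v_2)$ both satisfied by $z$, the coordinates $z(v_1)$ of $x\circ\pi_{e_1}^{-1}$ and $z(v_2)$ of $y\circ\pi_{e_2}^{-1}$ are exactly the coordinate $z(u)$ of a $\rho$-correlated pair, so the edges between $S$ and $S^c$ coming from this pair carry weight $\tfrac{1}{|\mathcal U|D^2}(\nu(0,1)+\nu(1,0))=\tfrac{2t}{|\mathcal U|D^2}$. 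Since a $(1-\gamma)^2$ fraction of edge pairs sharing a $\mathcal U$-vertex is simultaneously satisfied, this gives $w(S,S^c)\ge 2t(1-\gamma)^2\ge 2q(1-q)(1-\rho)-2\gamma$ (using $2t<1$).

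For soundness I would argue contrapositively: suppose $G$ fails to be $(r,\Gamma_\rho(r)-\varepsilon)$-dense, so some $S$ with $w(S)=r$ has $w(S,S)<\Gamma_\rho(r)-\varepsilon$. Letting $S_v$ be the indicator of $S$ on the block of $v$ and $S_u(x)=\mathbf E_{v\in N(u)}[S_v(x\circ\pi_e^{-1})]$, unwinding the definitions (and using that $\nu^{\otimes L}$ generates $\rho$-correlated $q$-biased pairs) yields $w(S,S)=\mathbf E_{u\in\mathcal U}[\mathbb S_\rho(S_u)]$, while regularity of $\Lambda$ gives $\mathbf E_{u}[\mathbf E[S_u]]=r$. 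Convexity of $\Gamma_\rho$ together with Jensen then forces a set $\mathcal U'$ with $|\mathcal U'|\ge\tfrac{\varepsilon}{2}|\mathcal U|$ on which $\mathbb S_\rho(S_u)<\Gamma_\rho(\mu_u)-\varepsilon/2$, since otherwise $w(S,S)\ge\Gamma_\rho(r)-\varepsilon$. For each such $u$, Theorem~\ref{thresholds_are_stablest} applied with error $\varepsilon/2$ supplies fixed $\tau,\delta>0$ and a coordinate $i$ with $\mathbf{Inf}_i[T_{1-\delta}S_u]\ge\tau$; convexity of influence and Markov pass this to a $\tau/2$ fraction of $v\in N(u)$ in the form $\mathbf{Inf}_{\pi_e(i)}[T_{1-\delta}S_v]\ge\tau/2$, and Lemma~\ref{inf_bound_lemma} caps the number of such heavy labels per $v$ by $O\!\bigl(1/(\tau\ln(1/(1-\delta)))\bigr)$. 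The standard decoding --- assign each $v$ a uniformly random heavy label (arbitrary if none) and each $u$ its best response --- then satisfies an $\Omega\!\bigl(\varepsilon\tau^4\ln^2(1/(1-\delta))\bigr)$ fraction of the constraints of $\Lambda$ in expectation, a positive quantity independent of $\gamma$, so choosing $\gamma$ below it contradicts $\opt(\Lambda)\le\gamma$.

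The main obstacle is the soundness direction, specifically the passage from ``$S_u$ is not $\rho$-stable'' to ``$S_u$ has an influential coordinate'', which is where the $q$-biased Majority-Is-Stablest statement (Theorem~\ref{thresholds_are_stablest}) and the $q$-biased influence bound (Lemma~\ref{inf_bound_lemma}) do the real work; one must also keep track that $\rho\in\kappa(q)$ is exactly the regime in which $\nu$, $N_\rho$ and $T_\rho$ are well-defined, so that $\ln(1/(1-\delta))>0$ and the final decoding probability is a genuine positive constant. The completeness argument and the weight bookkeeping are routine and I expect no difficulty there.
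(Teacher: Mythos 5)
Your proposal is correct and follows essentially the same route as the paper's own proof: the identical biased-long-code construction with noise distribution $\nu$, the same completeness argument via the $(1-\gamma)^2$ fraction of simultaneously satisfied edge pairs, and the same soundness argument combining convexity of $\Gamma_\rho$, the biased Majority-Is-Stablest theorem, the influence bound, and the standard randomized decoding. No gaps to report.
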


\subsection{Hardness for CC-Max-Cut}\label{cc_max_cut_hardness_section}
Now that we have proven Theorem \ref{hardness_cc_max_cut}, it is straightforward to prove the following theorem which gives a hardness result of CC-Max-Cut. 
\begin{thm}\label{max-cut-sol}
  For any $q \in (0,1)$ and $\rho \in \kappa(q)$ it is UG-hard to approximate CC-Max-Cut with cardinality constraint $q$ within $\beta^{cc}_{cut}(q,\rho) + \varepsilon$ where $\varepsilon>0$ is arbitrary small and $\beta^{cc}_{cut}(q,\rho)$ is given by
  \begin{equation*}
    \beta^{cc}_{cut}(q,\rho) = \frac{1-\Gamma_{\rho}(q)-\Gamma_{\rho}(1-q) }{2(q-q^2)(1-\rho)}.
  \end{equation*}
  \begin{proof}
    By Theorem~\ref{hardness_cc_max_cut} there exists a family of multigraphs $G=(V,E)$ for which it is UG-hard to decide between the following two statements:
    \begin{itemize}
      \item There is a set $S\subseteq V, w(S) = q$, such that $w(S,S^c) \geq 2q(1-q) (1-\rho) -2\gamma$.
      \item For any $r \in [0,1]$ and every set $T\subseteq V, w(T) = r$ we have $w(T,T) \geq \Gamma_{\rho} (r) - \varepsilon$. 
    \end{itemize}
    The second statement implies that for any $S \subseteq V, w(S) = q$, we have $w(S,S^c) = w(V,V) - w(S,S) - w(S^c,S^c) \leq 1-\Gamma_{\rho}(q) - \Gamma_{\rho}(1-q) + 2\varepsilon$. Therefore, by setting $\gamma$ sufficiently small this shows UG-hardness of approximating CC-Max-Cut with cardinality constraint $q$ within
    \begin{equation*}
      \frac{1-\Gamma_{\rho}(1-q)-\Gamma_{\rho}(q)}{2q(1-q)(1-\rho)} + 2\varepsilon,
    \end{equation*}
    where $\varepsilon >0 $ is arbitrarily small.
    This reduction yields a weighted graph, which can be easily converted into an unweighted multigraph, using e.g.~a simple reduction from Step 1 of Theorem 4.1. in \cite{DBLP:journals/toc/AustrinKS11}. 
  \end{proof}
\end{thm}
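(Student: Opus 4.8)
The plan is to obtain the theorem as an essentially immediate corollary of Theorem~\ref{hardness_cc_max_cut}, which already carries out the dictatorship-test / Unique-Games analysis; all that remains is to read off the completeness and soundness numbers for the CC-Max-Cut objective and take their ratio. Fix $q \in (0,1)$, $\rho \in \kappa(q)$, and an arbitrarily small target slack $\varepsilon' > 0$. I would apply Theorem~\ref{hardness_cc_max_cut} with its parameter $\varepsilon$ chosen small in terms of $\varepsilon'$ (to be pinned down at the end), obtaining a $\gamma > 0$ and a polynomial-time reduction sending Unique Games instances $\Lambda$ to weighted multigraphs $G = (V,E)$. For the all-$P_\ell=-1$ instance of Definition~\ref{Max-Cut_definition} associated with $G$, a $\pm1$ assignment with true-set $S$ has objective value exactly the weighted cut $w(S,S^c)$ and satisfies the cardinality constraint iff $w(S)=q$; so the CC-Max-Cut optimum on $G$ is $\max_{w(S)=q} w(S,S^c)$.

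\textbf{Completeness.} If $\opt(\Lambda)\ge 1-\gamma$, Theorem~\ref{hardness_cc_max_cut} hands us a set $S$ with $w(S)=q$ and $w(S,S^c)\ge 2q(1-q)(1-\rho)-2\gamma$; this $S$ is a feasible CC-Max-Cut solution, so the optimum is at least $2q(1-q)(1-\rho)-2\gamma$.

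\textbf{Soundness.} If $\opt(\Lambda)\le\gamma$, then $G$ is $(r,\Gamma_\rho(r)-\varepsilon)$-dense for every $r\in[0,1]$. For any $S$ with $w(S)=q$, using that every edge has both endpoints in $V$ we have $w(S,S)+w(S,S^c)+w(S^c,S^c)=w(E)=1$, and the density property applied at $r=q$ and at $r=1-q$ gives $w(S,S)\ge\Gamma_\rho(q)-\varepsilon$ and $w(S^c,S^c)\ge\Gamma_\rho(1-q)-\varepsilon$. Hence
\[
  w(S,S^c)\ \le\ 1-\Gamma_\rho(q)-\Gamma_\rho(1-q)+2\varepsilon .
\]
Dividing the soundness upper bound by the completeness lower bound, the hardness gap is
\[
  \frac{1-\Gamma_\rho(q)-\Gamma_\rho(1-q)+2\varepsilon}{2q(1-q)(1-\rho)-2\gamma},
\]
and since the denominator of $\beta^{cc}_{cut}(q,\rho)$ is a fixed positive constant, choosing $\varepsilon$ small and then $\gamma$ small makes this at most $\beta^{cc}_{cut}(q,\rho)+\varepsilon'$, giving UG-hardness on weighted multigraphs.

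\textbf{Removing weights.} Finally I would replace each vertex of $G$ by a number of unit-weight copies proportional to $w(v)$ and discretize the edge weights similarly, as in Step~1 of Theorem~4.1 of~\cite{DBLP:journals/toc/AustrinKS11}; this perturbs the weight of any set by $o(1)$ and turns the weighted constraint $w(S)=q$ into the combinatorial constraint that a $q$-fraction of the $n$ vertices is chosen. The only genuinely delicate point—and I expect it to be the main (though still routine) obstacle—is bookkeeping: ensuring that the completeness set and all soundness sets carry exactly the prescribed weight $q$ so that the density property can be invoked at both $q$ and $1-q$, and that the rounding in the weighted-to-unweighted step respects the cardinality constraint up to $o(1)$ without eroding the gap.
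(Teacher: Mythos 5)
Your proposal is correct and follows essentially the same route as the paper's proof: invoke Theorem~\ref{hardness_cc_max_cut}, use the completeness set for the lower bound, combine the density property at $r=q$ and $r=1-q$ with $w(S,S)+w(S,S^c)+w(S^c,S^c)=1$ for the soundness upper bound, take the ratio with $\varepsilon,\gamma$ small, and finish with the weighted-to-unweighted conversion from \cite{DBLP:journals/toc/AustrinKS11}. No gaps to flag.
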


\subsection{Hardness for Max-$k$-VC and CC-Max-2-Sat}\label{kvc_2sat_hardness_section}
Next we give the hardness result for Max-$k$-VC.
\begin{thm} \label{max-k-vc}
  Consider $q \in (0,1)$ and let $\rho \in \kappa(q)$. Then, it is UG-hard to approximate Max-$k$-VC with cardinality constraint $q$ within $\beta^{cc}_{vc}(q,\rho)+\varepsilon$ where $\varepsilon >0$ is arbitrary small and $\beta^{cc}_{vc}(q,\rho)$ is given by 
  \begin{equation*}
    \beta^{cc}_{vc}(q,\rho) = \frac{1-\Gamma_{\rho}(1-q)}{q(1+(1-q)(1-\rho))}.
  \end{equation*}
  \begin{proof}
    As we have shown in Theorem \ref{hardness_cc_max_cut}, there is a family of multigraphs $G=(V,E)$ for which it is UG-hard to decide between the following two statements:
    \begin{itemize}
      \item There is a set $S\subseteq V, w(S) = q$, such that $w(S,S^c) \geq 2q(1-q)(1-\rho) -2\gamma$.
      \item For any $r \in [0,1]$ and every set $T\subseteq V, w(T) = r$ we have $w(T,T) \geq \Gamma_{\rho} (r) - \varepsilon$. 
    \end{itemize}
    By \eqref{eqn:weight relation}, the first item implies that $w(S, V) = q(1 + q(1-q)(1-\rho)) - \gamma$.
    The second statement implies that for any $S \subseteq V, w(S) = q$, we have $w(S,V) = w(V,V) -  w(S^c,S^c) \leq 1-\Gamma_{\rho}(1-q) + \varepsilon$. Therefore, by letting $\gamma \to 0 $ this shows UG-hardness of approximating Max-$k$-VC with cardinality constraint $q$ within
    \begin{equation*}
     \frac{1-\Gamma_{\rho}(1-q)}{q(1+(1-q)(1-\rho))} + \varepsilon,
    \end{equation*}
    where $\varepsilon >0 $ is arbitrarily small. As in the CC-Max-Cut case, this reduction yields a weighted graph, which can be converted into an unweighted multigraph by using the reduction from \cite{DBLP:journals/toc/AustrinKS11}. 
  \end{proof}
\end{thm}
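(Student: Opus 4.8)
The plan is to invoke Theorem~\ref{hardness_cc_max_cut} with the given parameters $q,\varepsilon,\rho$ and simply read off what its completeness and soundness guarantees say about the optimal value of the Max-$k$-VC instance defined by the multigraph $G$ it outputs. The key observation is that for a graph with our weighting conventions and a set $S$ with $w(S)=q$, the weight of the edges \emph{covered} by $S$ (those with at least one endpoint in $S$) is exactly $w(S,V)$, so the optimum of the Max-$k$-VC instance with cardinality constraint $q$ equals $\max_{w(S)=q} w(S,V)$. I would first record the two elementary identities needed, both immediate from the weighting conventions: $w(S,V)=w(S)+\tfrac12 w(S,S^c)$, which is \eqref{eqn:weight relation}, and $w(S,V)=w(V,V)-w(S^c,S^c)=1-w(S^c,S^c)$, the latter because the edges inside $S$, the crossing edges, and the edges inside $S^c$ partition $E$.

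For the \emph{completeness} direction, when $\opt(\Lambda)\ge 1-\gamma$ Theorem~\ref{hardness_cc_max_cut} produces a set $S$ with $w(S)=q$ and $w(S,S^c)\ge 2q(1-q)(1-\rho)-2\gamma$; feeding this into the first identity gives $w(S,V)\ge q+q(1-q)(1-\rho)-\gamma=q\bigl(1+(1-q)(1-\rho)\bigr)-\gamma$, so $S$ is a legitimate cardinality-$q$ solution of the Max-$k$-VC instance witnessing an optimum at least this large. For the \emph{soundness} direction, when $\opt(\Lambda)\le\gamma$ the graph $G$ is $(r,\Gamma_\rho(r)-\varepsilon)$-dense for every $r\in[0,1]$; applying this with $r=1-q$ to the set $S^c$ gives $w(S^c,S^c)\ge \Gamma_\rho(1-q)-\varepsilon$ for \emph{every} $S$ with $w(S)=q$, and the second identity then yields $w(S,V)\le 1-\Gamma_\rho(1-q)+\varepsilon$ for all such $S$, i.e., the Max-$k$-VC optimum is at most $1-\Gamma_\rho(1-q)+\varepsilon$.

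Combining the two cases, distinguishing $\opt(\Lambda)\ge 1-\gamma$ from $\opt(\Lambda)\le\gamma$ is UG-hard, hence it is UG-hard to approximate Max-$k$-VC with cardinality constraint $q$ within
\[
\frac{1-\Gamma_\rho(1-q)+\varepsilon}{q\bigl(1+(1-q)(1-\rho)\bigr)-\gamma};
\]
the denominator is positive since $\rho\in\kappa(q)\subseteq(-1,0)$. Letting $\gamma\to 0$ (legitimate because the $\gamma$ furnished by Theorem~\ref{hardness_cc_max_cut} may be taken arbitrarily small, at the cost only of increasing $L$) and renaming the error term gives hardness within $\beta^{cc}_{vc}(q,\rho)+\varepsilon$ with $\beta^{cc}_{vc}(q,\rho)=\frac{1-\Gamma_\rho(1-q)}{q(1+(1-q)(1-\rho))}$, as claimed. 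Finally, Theorem~\ref{hardness_cc_max_cut} outputs a \emph{weighted} multigraph, so I would finish by noting, exactly as in the proof of Theorem~\ref{max-cut-sol}, that this can be converted to an unweighted multigraph with essentially the same optimum via the standard weight-discretization reduction of \cite{DBLP:journals/toc/AustrinKS11} (Step~1 of their Theorem~4.1).

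Since the genuine technical content — the dictatorship test, the appeal to Theorem~\ref{thresholds_are_stablest}, the convexity of $\Gamma_\rho$, and the influence-based decoding of a Unique Games labeling — is all packaged inside Theorem~\ref{hardness_cc_max_cut}, there is no serious obstacle here; the only thing requiring care is the bookkeeping with the weight functions, in particular remembering that the quantity maximized by Max-$k$-VC is $w(S,V)$ rather than $w(S,S^c)$, which is why the completeness bound must be routed through \eqref{eqn:weight relation} instead of used directly.
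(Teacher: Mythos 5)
Your proposal is correct and follows essentially the same route as the paper: invoke Theorem~\ref{hardness_cc_max_cut}, translate completeness via \eqref{eqn:weight relation} into a lower bound on $w(S,V)$, translate soundness (applied to $S^c$ at $r=1-q$) into the upper bound $w(S,V)\le 1-\Gamma_\rho(1-q)+\varepsilon$, take the ratio as $\gamma\to 0$, and finish with the weighted-to-unweighted conversion from \cite{DBLP:journals/toc/AustrinKS11}. Your completeness bound $w(S,V)\ge q\bigl(1+(1-q)(1-\rho)\bigr)-\gamma$ is in fact the corrected form of a small typo in the paper's displayed intermediate expression, so no changes are needed.
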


\subsection{Hardness as a Function of the Cardinality Contraint}
As we have concluded in Theorems \ref{max-cut-sol} and \ref{max-k-vc}, it is UG-hard to approximate CC-Max-Cut and Max-$k$-VC with cardinality constraint $q \in (0,1)$ to within
\begin{equation*}
  \beta^{cc}_{cut}(q) = \inf_{\rho \in \kappa(q)} \beta^{cc}_{cut}(q,\rho), \quad \beta^{cc}_{vc}(q) = \inf_{\rho \in \kappa(q) } \beta^{cc}_{vc}(q,\rho),
\end{equation*}
respectively. 
In Figure~\ref{hardness_plots_theorem} we visualize these two hardness curves as well as the resulting hardness for CC-Max-$2$-Sat (obtained by taking the lower half of the Max-$k$-VC curve and mirroring it).

\begin{figure}
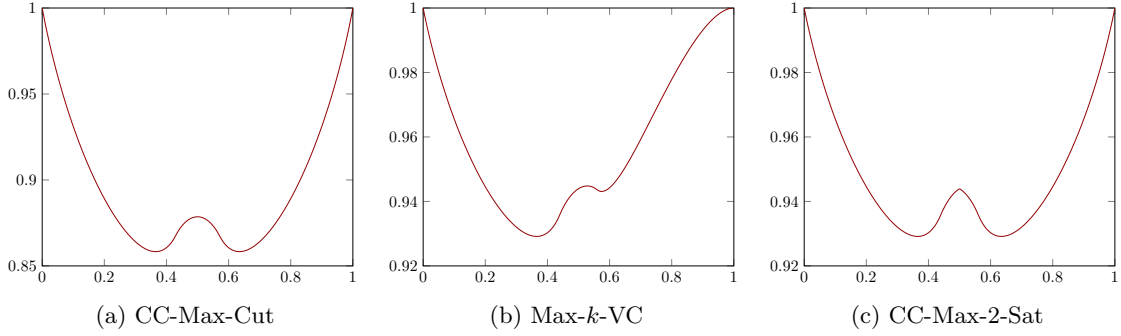

  \begin{subfigure}{0.33\textwidth}
    \resizebox{\linewidth}{!}{
      \input{basic-curve-cc-max-cut}
    }
    \caption{CC-Max-Cut}
    \label{fig basic-curve cc-max-cut}
  \end{subfigure}
  \begin{subfigure}{0.33\textwidth}
    \resizebox{\linewidth}{!}{
      \input{basic-curve-max-k-vc}
    }
    \caption{Max-$k$-VC}
    \label{fig basic-curve max-k-vc}
  \end{subfigure}
  \begin{subfigure}{0.33\textwidth}
    \resizebox{\linewidth}{!}{
      \input{basic-curve-cc-max-2-sat}
    }
    \caption{CC-Max-$2$-Sat}
    \label{fig basic-curve cc-max-2-sat}
  \end{subfigure}
  \caption{Hardness ratio as a function of cardinality constraint $q$ obtained by arguments from Section \ref{cc_max_cut_hardness_section} and Section \ref{kvc_2sat_hardness_section}. }
  \label{hardness_plots_theorem}
\end{figure}

For a fixed $q$ it is not clear for which $\rho$ the functions $\beta^{cc}_{cut}(q,\cdot)$ and $\beta^{cc}_{vc}(q,\cdot)$ are minimized. For the plots of the inapproximability curves in Figure \ref{hardness_plots_theorem}, the optimization over $\rho$ was done numerically. Interestingly, numerical calculations show that the worst-case value of the cardinality constraint $q < 1/2$ (the value of $q$ at which the hardness ratio meets the approximation ratio) is the same for Max-$k$-VC and CC-Max-Cut, and in particular its value is $q^* \approx 0.365$.  
The value of the correlation parameter $\rho$ for which this worst-case hardness is achieved is extremal, i.e., $\rho=-q^*/(1-q^*) \approx -0.575$.
However, the local minima at $q > 1/2$ in the two curves do not occur at the same value of $q$.  For CC-Max-Cut the curve is symmetric around $1/2$ and the minimum occurs at $1-q^* \approx 0.635$, but for the less symmetric Max-$k$-VC problem it occurs at $\approx 0.574$.
\par 
Furthermore, for all $q \le q^*$ and also for $q > 1/2$ greater than the respective local minimum, the $\rho$ minimizing both $\beta^{cc}_{cut}(q,\rho)$ and $\beta^{cc}_{vc}(q,\rho)$ is the minimum value of $\kappa(q)$. On the other hand, when $q$ is close to $1/2$, the best choice of $\rho$ does not equal $\min \kappa(q)$. For example, when $q=1/2$, the hardness we obtain for CC-Max-Cut is the same as for the Max-Cut problem, attained using the value $\rho \approx -0.689$.

\subsection{Improved Hardness Using Isolated Vertices}
\label{section:isolated vertices}
The hardness results shown in Figure~\ref{hardness_plots_theorem} obtained in the preceding sections can be improved by simple monotonicity arguments.  The simplest case is Max-$k$-VC, for which the true approximability curve must be monotone.

\begin{claim}
  \label{claim:max k vc monotone}
  If there is an $\alpha$-approximation algorithm for Max-$k$-VC with
  cardinality constraint $q'$, then there is also an
  $\alpha$-approximation algorithm for Max-$k$-VC for all cardinality
  constraints $q > q'$.
\end{claim}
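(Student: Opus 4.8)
The plan is to reduce a Max-$k$-VC instance carrying the larger cardinality constraint $q$ to one carrying the smaller constraint $q'$ by \emph{padding the graph with isolated vertices}: since $q' < q$, inflating the vertex set lets the same absolute number of selected vertices correspond to a smaller \emph{fraction} of the (now larger) graph. Concretely, given an input graph $G$ on $n$ vertices, so that the constraint $q$ asks for $k = qn$ vertices, I would form $G'$ by adjoining $m$ isolated vertices, choosing $m$ so that $q'(n+m) = qn = k$, i.e.\ $m = n\bigl(q/q' - 1\bigr)$, which is positive precisely because $q > q'$. A feasible solution for $G'$ under the constraint $q'$ then selects exactly $q'(n+m) = k$ vertices, the same budget as a feasible solution for $G$ under $q$; and since the new vertices are isolated, the edge set, and hence the set of coverable edges, is unchanged.

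The heart of the argument is the identity $\optval_{q'}(G') = \optval_q(G)$, together with the stronger statement that \emph{any} feasible solution for $G'$ can be converted in polynomial time into a feasible solution for $G$ covering at least as many edges. The inequality $\optval_{q'}(G') \ge \optval_q(G)$ is immediate, since a set of $k$ vertices of $G$ is feasible for $G'$ and covers the same edges. For the reverse inequality, and for the conversion, I would use a swapping argument: if a feasible set $S'$ for $G'$ contains an isolated padding vertex, then --- since $|S'| = k \le n$, so $S'$ cannot already contain every vertex of $G$ except in the trivial case $q = 1$ --- we may replace that padding vertex by a vertex of $G$ outside $S'$ without decreasing the number of covered edges; iterating removes all padding vertices and yields a feasible solution for $G$ that covers at least as many edges as $S'$. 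Running the assumed $\alpha$-approximation algorithm for constraint $q'$ on $G'$ and then applying this cleanup therefore produces a feasible solution for $G$ covering at least $\alpha \cdot \optval_{q'}(G') = \alpha \cdot \optval_q(G)$ edges, which is exactly what is required.

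The main --- though routine --- obstacle is that $m = n(q/q' - 1)$ and $k = qn$ must be integers for the padded instance to be literally well-posed. I expect this to be handled in the standard way: one may pass to the equivalent vertex-weighted formulation of the problem used elsewhere in the paper, where instead one adjoins isolated vertices of total weight $q/q' - 1$ and (after renormalizing) the weight fractions match on the nose, so that the swapping argument above goes through verbatim and no rounding is needed; alternatively one works with a sufficiently divisible instance. In either case the approximation ratio is unaffected, so all the combinatorial content of the claim is contained in the isolated-vertex padding together with the elementary swap.
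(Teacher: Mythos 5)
Your proposal is correct and follows essentially the same route as the paper: pad the graph with $(q/q'-1)|V|$ isolated vertices so that the budget $q'|V'|$ equals $q|V|$, and note that solutions need never use the padding vertices. Your explicit swap argument for cleaning up an approximate solution (and the remark on integrality/weights) just spells out details the paper's one-line proof leaves implicit.
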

\begin{proof}
  Given a Max-$k$-VC instance $G$ with cardinality constraint $q$, construct $G'$ by adding $(q/q' - 1)|V|$ isolated vertices to $G$.  Observe that an optimal Max-$k$-VC solution to $G'$ with cardinality constraint $q'$ only uses vertices from $G$ and has cardinality $q'|V'| = q|V|$.
\end{proof}
Applying Claim~\ref{claim:max k vc monotone} to the Max-$k$-VC curve in Figure~\ref{fig basic-curve max-k-vc} we obtain improvements for small $q$ and for $q$ around $1/2$, as shown in Figure~\ref{fig flatten max-k-vc}.

The direct analogue of Claim~\ref{claim:max k vc monotone} is not
obviously true for CC-Max-Cut and CC-Max-2-Sat, because the optimum
value as a function of the cardinality constraint $q$ is not
necessarily monotone (so an optimal solution of $G'$ in the above
reduction might select some of the newly added isolated vertices).
However, note that in the soundness case of our reduction, the
CC-Max-Cut value with cardinality constraint $q$ is (up to $\epsilon$
error) $1 - \Lambda_\rho(q) - \Lambda_\rho(1-q)$, and this is a
monotonically increasing function for $q \le 1/2$.  This implies that
for these instances and $q \le 1/2$ this trivial reduction of adding
isolated vertices still works (and symmetrically for $q \ge 1/2$),
resulting in the improvements shown in
Figure~\ref{fig flatten cc-max-cut} for $q$ bounded away from $1/2$.

Finally, for CC-Max-2-Sat and any value of $q$, we note that by adding
$|V|/\min(q, 1-q)$ new dummy variables, the problem is as hard as unconstrained
Max-2-Sat, giving further small improvements for $q$ close to $1/2$ as
shown in Figure~\ref{fig flatten cc-max-2-sat}.

\begin{figure}
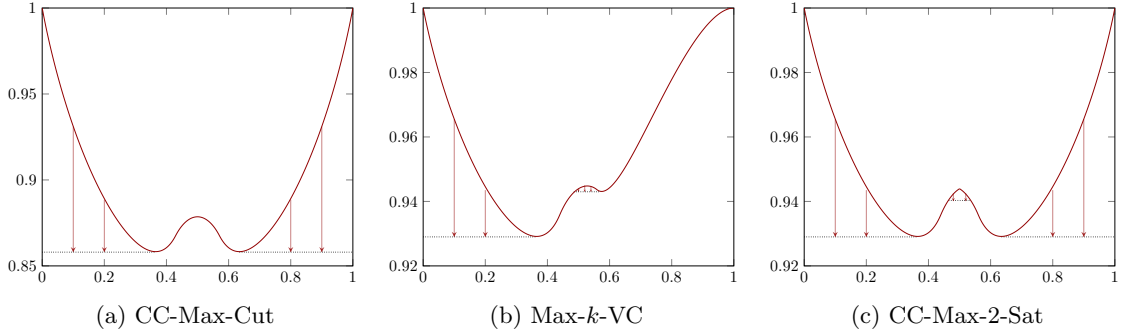

  \begin{subfigure}{0.33\textwidth}
    \resizebox{\linewidth}{!}{
      \input{flatten-cc-max-cut}
    }
    \caption{CC-Max-Cut}
    \label{fig flatten cc-max-cut}
  \end{subfigure}
  \begin{subfigure}{0.33\textwidth}
    \resizebox{\linewidth}{!}{
      \input{flatten-max-k-vc}
    }
    \caption{Max-$k$-VC}
    \label{fig flatten max-k-vc}
  \end{subfigure}
  \begin{subfigure}{0.33\textwidth}
    \resizebox{\linewidth}{!}{
      \input{flatten-cc-max-2-sat}
    }
    \caption{CC-Max-$2$-Sat}
    \label{fig flatten cc-max-2-sat}
  \end{subfigure}
  \caption{Improving hardness using isolated vertices/dummy variables.}
  \label{improved_hardness_plots}
\end{figure}

\section{Approximation Algorithm} \label{RT_section}
In this section we recall the algorithm of Raghavendra and Tan \cite{DBLP:conf/soda/RaghavendraT12}, somewhat reformulated in order to obtain explicit expressions for the approximation ratios that match the hardness results we obtain.  We keep the exposition at a high level and skip over certain technical details, and refer the reader interested in the details to \cite{DBLP:conf/soda/RaghavendraT12} or the follow-up work \cite{DBLP:conf/soda/AustrinBG13}.

\par 
In order to find a good approximation for NP-hard integer optimization problems given in Definitions \ref{Max-Cut_definition} and \ref{max-2-sat-cc_definition} we use semidefinite programming (SDP) relaxations.  
In particular, we extend the domain of variables $\{x_i\}_{i=1}^{n}$ from $\{0,1\}$ to vectors on an $n$-sphere, which we denote by $v_i \in S^n$. We also introduce a vector $v_0 \in S^n$ which represents the value false (corresponding value is $1$ in the integer program). Then, we replace $x_i$ by the scalar product $\langle v_0,v_i \rangle$ and $x_ix_j$ with $\langle v_i,v_j \rangle$. For example, the semidefinite relaxation of the CC-Max-Cut program is given as
\begin{equation*}
  \begin{split}
    \max  &  \sum\limits_{(i,j)=e_{\ell} \in E } \frac{1- \langle v_i,v_j \rangle}{2}, \\
    \textrm{s.t.} & \sum\limits_{i \in V } \langle v_i,v_0 \rangle = r n.
  \end{split}
\end{equation*}

Furthermore, since $|x_i-x_j| \leq |x_i-x_k| + |x_k-x_j|$, we also demand from the vectors $v_i$ to satisfy the triangle inequalities $\| v_i - v_j \|_2^2 \leq \|v_i-v_0\|_2^2 + \|v_0-v_j\|_2^2$.
In order to relax the notation we define $\mu_i = \langle v_0, v_i \rangle$, $\rho_{ij} = \langle v_i, v_j \rangle$, and write triangle inequalities as
\begin{equation*}
  \begin{aligned} \label{triangle_ineq}
   & \mu_i+\mu_j+\rho_{ij}  \geq -1, \quad &  \mu_i - \mu_j-\rho_{ij}  \geq -1,  \\
    - &\mu_i+\mu_j-\rho_{ij} \geq -1, \quad &  -\mu_i - \mu_j+\rho_{ij} \geq -1. 
  \end{aligned}
\end{equation*}
The triples $(\mu_1,\mu_2,\rho)$ satisfying triangle inequalities will be called \emph{configurations}. We denote the set of all configurations as $\mathbf{Conf} \subseteq [-1,1]^3$.
We can solve a semidefinite program up to desired accuracy in polynomial time. Then, the main challenge is finding a \emph{rounding algorithm} which translates the vectors $\{v_i\}_{i=0}^n$ back to $\{-1,1\}$ so that they satisfy the balance constraint, and such that the rounding does not incur a big loss in the objective value. Raghavendra and Tan used a randomized rounding procedure, which rounds vectors $\{v_i\}_{i=0}^n$ to $\pm 1$ integers $\{y_i\}_{i=1}^n$ in the following way. First, let us define $w_i = v_i - \mu_i v_0$, and let\footnote{We assume that $\|w_i\| \neq 0 $, since we can introduce a small perturbation to the values $v_i$ without affecting the objective value too much.} $\overline{w}_i = w_i/\|w_i\|$. Then, we draw a vector $g$ from the Gaussian distribution $\mathcal{N}(0,I^{n+1})$ and set the values of $\overline{y}_i$ as
\begin{equation*}
  \overline{y}_i =
  \left\{
	\begin{array}{ll}
    1  & \mbox{if }  \langle g,\overline{w}_i \rangle  \geq \Phi^{-1}\left(\frac{1-\mu_i}{2}\right), \ \\
		-1 & \mbox{otherwise.}  
	\end{array}
  \right. 
\end{equation*}
It is trivial to check that $\mathbf{E}[ \overline{y}_i]  = \mu_i$, so we have $\mathbf{E}\left [ \sum\limits_{i=1}^{n} \overline{y}_i\right ]  = r n$, and therefore the solution $\{\overline{y}_i\}_{i=1}^{n}$ satisfies the balance constraint in expectation. Furthermore, as shown in \cite{DBLP:conf/soda/RaghavendraT12}, using additional levels of the Lasserre hierarchy we can guarantee that with probability $1-\delta$ the sampled solution $\{\overline{y}_i\}_{i=1}^n$ is $O(\delta)$-far away from satisfying the balance constraint, where $\delta>0$ can be chosen arbitrarily small. Therefore, we can change the values of at most $O(\delta) n$ variables $\overline{y}_i$ to get a solution $y_i$ exactly satisfying the balance constraint, while losing only an additional small factor $O(\delta)$ in the objective value.  Thus, it is sufficient to show that the objective value of the $\overline{y}_i$'s is large.
\par
Consider now the SDP relaxation for any of the integer programs $\mathcal{F}$ given in either Definition \ref{Max-Cut_definition} or Definition \ref{max-2-sat-cc_definition}, and let $\sdpval (\mathcal{F})$ be the optimal value of the SDP relaxation for the instance $\mathcal{F}$. We have that $\sdpval(\mathcal{F}) \geq \optval(\mathcal{F})$. Finally, let us define $\rndval(\mathcal{F})$ to be the expectation of the value of the objective function after randomized rounding procedure.  The analysis of the approximation ratio for the algorithm boils down to proving $\rndval(\mathcal{F}) \geq \alpha\sdpval(\mathcal{F})$, where $\alpha $ is a constant that depends on the problem of interest. The way to calculate $\alpha$ is to look at the loss incurred by rounding at each constraint. Let us now show how this can be done for the CC-Max-Cut problem. \par
The expected value of each constraint $\frac{1-x_i x_j}{2}$ after rounding the SDP solution of CC-Max-Cut problem is $\frac{1-\mathop{\mathbf{E}}\left [  \overline{y}_i \overline{y}_j\right] }{2}$, and therefore at each constraint the loss factor incurred by rounding is given as
\begin{equation*}
  \frac{1-\mathop{\mathbf{E}}\left [  \overline{y}_i \overline{y}_j \right] }{2} \cdot \frac{1}{(1-\langle v_i, v_j \rangle )/2}.
\end{equation*}

Thus, in order to calculate the approximation ratio, we need to bound this expression from below. Let us first note that 
\begin{equation*}
  \mathop{\mathbf{E}}[\overline{y}_1 \overline{y}_2] =  4\Gamma_{\overline{\rho}} \left(\frac{1-\mu_1}{2}, \frac{1-\mu_2}{2} \right) +\mu_1+\mu_2-1,
\end{equation*}
where $\overline{\rho}$ is given as
\begin{equation*}
  \overline{\rho} = \frac{\rho-\mu_1\mu_2}{\sqrt{1-\mu_1^2}\sqrt{1-\mu_2^2}}.
\end{equation*}
Then, the approximation ratio is lower bounded by the quantity $\alpha^{cc}_{cut}$ defined as the solution of the optimization problem
\begin{equation*}
  \alpha^{cc}_{cut} = \min\limits_{(\mu_1,\mu_2,\rho) \in \textbf{Conf}} \frac{2- 4 \Gamma_{\overline{\rho}} \left(\frac{1-\mu_1}{2}, \frac{1-\mu_2}{2} \right) -\mu_1-\mu_2 } { 1-\rho}.
\end{equation*}
Computing $\alpha^{cc}_{cut}$ is a hard global optimization problem, and therefore we resort to numerical computations to estimate it (we remark that the same approach is taken for a similar function in \cite{DBLP:conf/ipco/LewinLZ02} and \cite{DBLP:conf/stoc/Austrin07}). Extensive numerical experiments show that the minimum is attained at $\mu_1=\mu_2=\mu$, while the $\rho$ is on the boundary of the polytope $\mathbf{Conf}$, $\rho = -1+2|\mu|$. More precisely, the minimum is attained at $\mu \approx 0.27  $, and $\rho \approx -0.575$, and it has a value of approximately $0.858$.  \par
Assuming that the minimum is attained at the configuration of the form $(\mu,\mu,-1+2\mu), \mu>0$, constant $\alpha^{cc}_{cut}$ can be found as the minimum of a function
\begin{equation*}
   \frac{1- 2 \Gamma_{\overline{\rho}} \left(\frac{1-\mu}{2}, \frac{1-\mu}{2} \right) -\mu } { 1-\mu}, 
\end{equation*}
where $\mu \in (0,1) $. If we introduce $q = (1-\mu)/2$, we can reexpress this function as
\begin{equation*}
  \alpha^{cc}_{cut}(q)= \frac{2q- 2\Gamma_{\overline{\rho}} \left(q \right)  } {2q } = \frac{1- \Gamma_{\overline{\rho}} \left(q \right)  - \Gamma_{\overline{\rho}} \left(1-q \right)  } {2q } , \quad q \in (0,1/2),
\end{equation*}
where in the last equality we used Lemma \ref{bivariate_lemma}. Furthermore, $\overline{\rho} = -q/(1-q)$. Similar analysis for CC-Max-2-Lin shows that the approximation ratio is the minimal value of the same function. 
\par Straightforward calculations show that $\beta^{cc}_{cut}(q,-q/(1-q))$ from Theorem~\ref{max-cut-sol} equals the value of $\alpha^{cc}_{cut}(q)$. Therefore, under the (mild) assumption that worst-case configurations indeed take the special form as explained above, our hardness result is sharp and the algorithm for CC-Max-Cut of Raghavendra and Tan is optimal on general instances of CC-Max-Cut / CC-Max-2-Lin.

\par 

In completely analogous way, we can conclude that the approximation ratio for CC-Max-2-Sat and Max-$k$-VC problems  can be calculated as the minimum of the following function
\begin{equation*}
  \alpha^{cc}_{2sat}(q) = \frac{1-\Gamma_{\overline{\rho}}(1-q)}{2q},  \quad q \in (0,1/2),
\end{equation*}
where $\overline{\rho} = -q/(1-q)$.  Numerical experiments show that $\alpha^{cc}_{2sat} \approx 0.929$, and that the minimum is attained at $q \approx 0.365$.
\par 
Again we have that the corresponding hardness expression from Theorem~\ref{max-k-vc} satisfies $\beta^{cc}_{vc}(q,-q/(1-q)) = \alpha^{cc}_{2sat}(q)$, implying (under the assumption on worst-case configurations) that the algorithm for CC-Max-2-Sat of Raghavendra and Tan is optimal.
\par
\section{Conclusion and Some Open Questions}\label{discussion_section}

We studied cardinality constrained 2-CSPs, and assuming the Unique Games Conjecture derived hardness results which show that approximation ratios achieved by the algorithm described in \cite{DBLP:conf/soda/RaghavendraT12} are optimal for CC-Max-$2$-Sat (and its special case Max-$k$-VC) and CC-Max-$2$-Lin (and its special case CC-Max-Cut).  An obvious open question is to close the gap between the approximation ratio and hardness for all values of $q$ in Figure~\ref{plot_cc_max_cut}.

It would be interesting to derive UG-hardness for related CC-Max-CSPs of arity $2$, most interestingly for the Max-$k$-DS problem.  While super-constant hardness for Max-$k$-DS is currently known under the closely related Small-Set Expansion Hypothesis \cite{DBLP:conf/stoc/RaghavendraS10}, it is not yet known whether the UGC implies hardness of Max-$k$-DS. Another interesting CSP of arity $2$ is the CC-Max-Di-Cut problem, which as far as we are aware has not been previously studied in the literature.  It has a simple randomized $1/4$-approximation algorithm (pick $k/2$ out of the $k$ vertices with highest out-degree at random, and $k/2$ out of the other $|V|-k$ vertices at random) and is as hard as CC-Max-Cut, but beyond that we are not aware of any results.

Arguably, the simple trick of adding isolated vertices to achieve the flat parts of Figure~\ref{plot_cc_max_cut} in Section~\ref{section:isolated vertices} is somewhat dissatisfactory, and suggests that it may be more interesting to instead study the approximability of these problems on \emph{regular} instances.  The graphs produced by our main reduction are indeed regular so the hardness curves in Figure~\ref{hardness_plots_theorem} still apply for regular graphs.  Furthermore, it is easy to see that in the regular case the best approximation ratio does tend to $1$ as $q$ tends to $0$ or $1$.  For instance, for CC-Max-Cut in a regular graph with cardinality constraint $q \le 1/2$, picking a random set of $q|V|$ vertices gives an approximation ratio of $1-q$, because it is expected to cut a $2q(1-q)$ fraction of all edges, and no cut can cut more than a $2q$ fraction of all edges.  For small $q$ this matches the hardness result in Figure~\ref{fig basic-curve cc-max-cut}, which up to lower order terms equals $1-q$.
\par
Another interesting direction would be to come up with hardness results for cardinality constrained versions of some other well-know Max-CSPs like Max-3-Sat, or even more ambitiously to extend the results of Raghavendra \cite{DBLP:conf/stoc/Raghavendra08} and obtain tight hardness for all cardinality-constrained Max-CSPs.

\section*{Acknowledgements}
The authors thank Johan H\aa stad for helpful suggestions and comments on the manuscript. We also thank anonymous reviewers for their helpful remarks.

\bibliography{bibl}{}

\begin{thebibliography}{10}

\bibitem{DBLP:conf/focs/AroraLMSS92}
Sanjeev Arora, Carsten Lund, Rajeev Motwani, Madhu Sudan, and Mario Szegedy.
\newblock Proof verification and hardness of approximation problems.
\newblock In {\em 33rd Annual Symposium on Foundations of Computer Science,
  Pittsburgh, Pennsylvania, USA, 24-27 October 1992}, pages 14--23, 1992.

\bibitem{DBLP:conf/focs/AroraS92}
Sanjeev Arora and Shmuel Safra.
\newblock Probabilistic checking of proofs; {A} new characterization of {NP}.
\newblock In {\em 33rd Annual Symposium on Foundations of Computer Science,
  Pittsburgh, Pennsylvania, USA, 24-27 October 1992}, pages 2--13, 1992.

\bibitem{DBLP:conf/stoc/Austrin07}
Per Austrin.
\newblock Balanced max 2-sat might not be the hardest.
\newblock In {\em Proceedings of the 39th Annual {ACM} Symposium on Theory of
  Computing, San Diego, California, USA, June 11-13, 2007}, pages 189--197,
  2007.

\bibitem{DBLP:conf/soda/AustrinBG13}
Per Austrin, Siavosh Benabbas, and Konstantinos Georgiou.
\newblock Better balance by being biased: {A} 0.8776-approximation for max
  bisection.
\newblock In {\em Proceedings of the Twenty-Fourth Annual {ACM-SIAM} Symposium
  on Discrete Algorithms, {SODA} 2013, New Orleans, Louisiana, USA, January
  6-8, 2013}, pages 277--294, 2013.

\bibitem{DBLP:journals/toc/AustrinKS11}
Per Austrin, Subhash Khot, and Muli Safra.
\newblock Inapproximability of vertex cover and independent set in bounded
  degree graphs.
\newblock {\em Theory of Computing}, 7(1):27--43, 2011.

\bibitem{DBLP:conf/isaac/BlaserM02}
Markus Bl{\"{a}}ser and Bodo Manthey.
\newblock Improved approximation algorithms for max-2sat with cardinality
  constraint.
\newblock In {\em Algorithms and Computation, 13th International Symposium,
  {ISAAC} 2002 Vancouver, BC, Canada, November 21-23, 2002, Proceedings}, pages
  187--198, 2002.

\bibitem{DBLP:journals/eccc/BrakensiekGG19}
Joshua Brakensiek, Sivakanth Gopi, and Venkatesan Guruswami.
\newblock Csps with global modular constraints: Algorithms and hardness via
  polynomial representations.
\newblock {\em Electronic Colloquium on Computational Complexity {(ECCC)}},
  26:13, 2019.

\bibitem{DBLP:conf/focs/Bulatov17}
Andrei~A. Bulatov.
\newblock A dichotomy theorem for nonuniform csps.
\newblock In {\em 58th {IEEE} Annual Symposium on Foundations of Computer
  Science, {FOCS} 2017, Berkeley, CA, USA, October 15-17, 2017}, pages
  319--330, 2017.

\bibitem{DBLP:journals/corr/abs-1010-0201}
Andrei~A. Bulatov and D{\'{a}}niel Marx.
\newblock The complexity of global cardinality constraints.
\newblock {\em Logical Methods in Computer Science}, 6(4), 2010.

\bibitem{DBLP:journals/siamcomp/DinurMR09}
Irit Dinur, Elchanan Mossel, and Oded Regev.
\newblock Conditional hardness for approximate coloring.
\newblock {\em {SIAM} J. Comput.}, 39(3):843--873, 2009.

\bibitem{DBLP:journals/jal/FeigeL06}
Uriel Feige and Michael Langberg.
\newblock The rpr\({}^{\mbox{2}}\) rounding technique for semidefinite
  programs.
\newblock {\em J. Algorithms}, 60(1):1--23, 2006.

\bibitem{DBLP:journals/algorithmica/FriezeJ97}
Alan~M. Frieze and Mark Jerrum.
\newblock Improved approximation algorithms for {MAX} k-cut and {MAX}
  {BISECTION}.
\newblock {\em Algorithmica}, 18(1):67--81, 1997.

\bibitem{DBLP:conf/stoc/GoemansW94}
Michel~X. Goemans and David~P. Williamson.
\newblock .879-approximation algorithms for {MAX} {CUT} and {MAX} 2sat.
\newblock In {\em Proceedings of the Twenty-Sixth Annual {ACM} Symposium on
  Theory of Computing, 23-25 May 1994, Montr{\'{e}}al, Qu{\'{e}}bec, Canada},
  pages 422--431, 1994.

\bibitem{DBLP:journals/mst/GuruswamiL16}
Venkatesan Guruswami and Euiwoong Lee.
\newblock Complexity of approximating {CSP} with balance / hard constraints.
\newblock {\em Theory Comput. Syst.}, 59(1):76--98, 2016.

\bibitem{DBLP:conf/focs/GuruswamiMR08}
Venkatesan Guruswami, Rajsekar Manokaran, and Prasad Raghavendra.
\newblock Beating the random ordering is hard: Inapproximability of maximum
  acyclic subgraph.
\newblock In {\em 49th Annual {IEEE} Symposium on Foundations of Computer
  Science, {FOCS} 2008, October 25-28, 2008, Philadelphia, PA, {USA}}, pages
  573--582, 2008.

\bibitem{DBLP:journals/rsa/HalperinZ02}
Eran Halperin and Uri Zwick.
\newblock A unified framework for obtaining improved approximation algorithms
  for maximum graph bisection problems.
\newblock {\em Random Struct. Algorithms}, 20(3):382--402, 2002.

\bibitem{DBLP:journals/jacm/Hastad01}
Johan H{\aa}stad.
\newblock Some optimal inapproximability results.
\newblock {\em J. {ACM}}, 48(4):798--859, 2001.

\bibitem{DBLP:conf/esa/Hofmeister03}
Thomas Hofmeister.
\newblock An approximation algorithm for {MAX-2-SAT} with cardinality
  constraint.
\newblock In {\em Algorithms - {ESA} 2003, 11th Annual European Symposium,
  Budapest, Hungary, September 16-19, 2003, Proceedings}, pages 301--312, 2003.

\bibitem{DBLP:conf/focs/KarloffZ97}
Howard~J. Karloff and Uri Zwick.
\newblock A 7/8-approximation algorithm for {MAX} 3sat?
\newblock In {\em 38th Annual Symposium on Foundations of Computer Science,
  {FOCS} '97, Miami Beach, Florida, USA, October 19-22, 1997}, pages 406--415,
  1997.

\bibitem{DBLP:conf/stoc/Khot02a}
Subhash Khot.
\newblock On the power of unique 2-prover 1-round games.
\newblock In {\em Proceedings on 34th Annual {ACM} Symposium on Theory of
  Computing, May 19-21, 2002, Montr{\'{e}}al, Qu{\'{e}}bec, Canada}, pages
  767--775, 2002.

\bibitem{DBLP:journals/siamcomp/KhotKMO07}
Subhash Khot, Guy Kindler, Elchanan Mossel, and Ryan O'Donnell.
\newblock Optimal inapproximability results for {MAX-CUT} and other 2-variable
  csps?
\newblock {\em {SIAM} J. Comput.}, 37(1):319--357, 2007.

\bibitem{DBLP:conf/coco/KhotR03}
Subhash Khot and Oded Regev.
\newblock Vertex cover might be hard to approximate to within 2-$\varepsilon$.
\newblock In {\em 18th Annual {IEEE} Conference on Computational Complexity
  (Complexity 2003), 7-10 July 2003, Aarhus, Denmark}, page 379, 2003.

\bibitem{DBLP:conf/ipco/LewinLZ02}
Michael Lewin, Dror Livnat, and Uri Zwick.
\newblock Improved rounding techniques for the {MAX} 2-sat and {MAX} {DI-CUT}
  problems.
\newblock In {\em Integer Programming and Combinatorial Optimization, 9th
  International {IPCO} Conference, Cambridge, MA, USA, May 27-29, 2002,
  Proceedings}, pages 67--82, 2002.

\bibitem{DBLP:conf/stoc/Manurangsi17}
Pasin Manurangsi.
\newblock Almost-polynomial ratio eth-hardness of approximating densest
  k-subgraph.
\newblock In {\em Proceedings of the 49th Annual {ACM} {SIGACT} Symposium on
  Theory of Computing, {STOC} 2017, Montreal, QC, Canada, June 19-23, 2017},
  pages 954--961, 2017.

\bibitem{DBLP:conf/soda/Manurangsi19}
Pasin Manurangsi.
\newblock A note on max k-vertex cover: Faster fpt-as, smaller approximate
  kernel and improved approximation.
\newblock In {\em 2nd Symposium on Simplicity in Algorithms, SOSA@SODA 2019,
  January 8-9, 2019 - San Diego, CA, {USA}}, pages 15:1--15:21, 2019.

\bibitem{MR2630040}
Elchanan Mossel, Ryan O'Donnell, and Krzysztof Oleszkiewicz.
\newblock Noise stability of functions with low influences: invariance and
  optimality.
\newblock {\em Ann. of Math. (2)}, 171(1):295--341, 2010.

\bibitem{DBLP:books/daglib/0033652}
Ryan O'Donnell.
\newblock {\em Analysis of Boolean Functions}.
\newblock Cambridge University Press, 2014.

\bibitem{DBLP:conf/stoc/Raghavendra08}
Prasad Raghavendra.
\newblock Optimal algorithms and inapproximability results for every csp?
\newblock In {\em Proceedings of the 40th Annual {ACM} Symposium on Theory of
  Computing, Victoria, British Columbia, Canada, May 17-20, 2008}, pages
  245--254, 2008.

\bibitem{DBLP:conf/stoc/RaghavendraS10}
Prasad Raghavendra and David Steurer.
\newblock Graph expansion and the unique games conjecture.
\newblock In {\em Proceedings of the 42nd {ACM} Symposium on Theory of
  Computing, {STOC} 2010, Cambridge, Massachusetts, USA, 5-8 June 2010}, pages
  755--764, 2010.

\bibitem{DBLP:conf/soda/RaghavendraT12}
Prasad Raghavendra and Ning Tan.
\newblock Approximating csps with global cardinality constraints using {SDP}
  hierarchies.
\newblock In {\em Proceedings of the Twenty-Third Annual {ACM-SIAM} Symposium
  on Discrete Algorithms, {SODA} 2012, Kyoto, Japan, January 17-19, 2012},
  pages 373--387, 2012.

\bibitem{DBLP:journals/siamcomp/Raz98}
Ran Raz.
\newblock A parallel repetition theorem.
\newblock {\em {SIAM} J. Comput.}, 27(3):763--803, 1998.

\bibitem{Schaefer:1978:CSP:800133.804350}
Thomas~J. Schaefer.
\newblock The complexity of satisfiability problems.
\newblock In {\em Proceedings of the Tenth Annual ACM Symposium on Theory of
  Computing}, STOC '78, pages 216--226, New York, NY, USA, 1978. ACM.

\bibitem{DBLP:journals/algorithmica/Sviridenko01}
Maxim Sviridenko.
\newblock Best possible approximation algorithm for {MAX} {SAT} with
  cardinality constraint.
\newblock {\em Algorithmica}, 30(3):398--405, 2001.

\bibitem{DBLP:journals/mp/Ye01}
Yinyu Ye.
\newblock A .699-approximation algorithm for max-bisection.
\newblock {\em Math. Program.}, 90(1):101--111, 2001.

\bibitem{DBLP:conf/focs/Zhuk17}
Dmitriy Zhuk.
\newblock A proof of {CSP} dichotomy conjecture.
\newblock In {\em 58th {IEEE} Annual Symposium on Foundations of Computer
  Science, {FOCS} 2017, Berkeley, CA, USA, October 15-17, 2017}, pages
  331--342, 2017.

\end{thebibliography}
\bibliographystyle{plain}
 
\end{document}